\documentclass[letterpaper, 10pt, conference]{ieeeconf}
\IEEEoverridecommandlockouts
\usepackage{cite}
\usepackage{amsmath,amssymb,amsfonts}

\usepackage{graphicx}
\usepackage{textcomp}
\usepackage{xcolor}
\usepackage{amsmath, amssymb, amsthm}
\usepackage{algorithm}
\usepackage{algpseudocode}
\usepackage{booktabs}
\usepackage{hyperref}
\newtheorem{lemma}{Lemma}
\newtheorem{proposition}{Proposition}
\newtheorem{corollary}{Corollary}
\newtheorem{theorem}{Theorem}

\newtheorem{remark}{Remark}
\newtheorem{assumption}{Assumption}
\newtheorem{definition}{Definition}
\begin{document}

% For draft editing
\newcommand{\Red}[1]{\textcolor{red}{[{#1}]}}

\title{\LARGE \bf
An Evolutionary Algorithm for Actuator-Sensor-Communication Co-Design in Distributed Control
}
\author{Pengyang Wu and Jing Shuang (Lisa) Li% <-this % stops a space
\thanks{At the time this manuscript was written, P.W. and J.S.L. were both with the Department of Electrical Engineering and Computer Science, University of Michigan, Ann Arbor, MI, 48109, USA.
        {\tt\small \{pengyanw, jslisali\}@umich.edu}.    }%
}

\maketitle

\begin{abstract}
This paper studies the co-design of actuators, sensors, and communication in the distributed setting, where a networked plant is partitioned into subsystems each equipped with a sub-controller interacting with other sub-controllers. The objective is to jointly minimize control cost (measured by LQ cost) and material cost (measured by the number of actuators, sensors, and communication links used). We approach this using an evolutionary algorithm to selectively prune a baseline dense LQR controller. We provide convergence and stability analyses for this algorithm. For unstable plants, controller pruning is more likely to induce instability; we provide an algorithm modification to address this. The proposed methods is validated in simulations. One key result is that co-design of a 98-state swing equation model can be done on a standard laptop in seconds; the co-design outperforms naive controller pruning by over 50\%.
\end{abstract}

\section{Introduction \& Motivation}
Large-scale networked dynamical systems arise in a wide range of modern engineering applications, including power grids, traffic networks, multi-agent robotic systems, and cyber-physical infrastructures. In such systems, centralized control architectures are often impractical due to scalability limitations and communication constraints. These challenges have motivated extensive research on \emph{distributed control}, where control actions are computed using only local information and limited inter-agent communication \cite{9535262, anderson2019system, shin}. There is also increasing interest in \emph{co-design} methodologies that jointly consider control performance, sensor and actuator placement, and communication structure. In the distributed control setting, co-design remains a fundamentally difficult problem, often translating to nonlinear mixed-integer programming problems (NLMIPs) or other similarly combinatorial problems. Existing approaches often focus on special subsets of the co-design problem which can be reduced to a more tractable form \cite{Pequito2015, Jiang2016, Moothedath2019, Singh2021}. However, this reduction eliminates large portions of the design space, which may contain more cost-effective solutions. 

In this paper, we study the co-design of actuators, sensors, and communication in the distributed linear-quadratic (LQ) setting via evolutionary algorithms (EA). EAs are heuristic optimization methods inspired by natural selection, and have demonstrated efficacy on complex problems such as NLMIPs \cite{Whitley2001, DeJong2014}. The problem setup is provided in Section \ref{sec:problem_setup}; the EA approach is described in Section \ref{sec:ea}. We then provide convergence (Section \ref{sec:convergence}) and stability (Section \ref{sec:stability}) analysis, as well as an additional sparsity-promoting modification to the base EA (Section \ref{sec:gersh_repair}); our approach and analyses are validated in simulation (Section \ref{sec:simulations}).

\section{Problem setup} \label{sec:problem_setup}

Consider a discrete-time linear time-invariant system:
\begin{equation} \label{eq:plant}
x_{k+1} = A x_k + B u_k
\end{equation}
where $x_k \in \mathbb{R}^{N_x}$ is the state vector, $u_k \in \mathbb{R}^{N_u}$ is the control input, $A \in \mathbb{R}^{N_x \times N_x}$ is the state matrix, and $B \in \mathbb{R}^{N_x \times N_u}$ is the input matrix. The system contains $N$ interconnected subsystems, each having one or more states. State and control are partitioned into $[x]_i$, $[u]_i$, and $[w]_i$ for each subsystem $i$; system matrices $A$ and $B$ are partitioned into $[A]_{ij}$, $[B]_{ij}$, which capture subsystem-level interaction. System topology is described by an unweighted directed graph $\mathcal{G}(\mathcal{V}, \mathcal{E})$, where vertex $v_i$ corresponds to subsystem $i$, and edge $e_{ij} \in \mathcal{E}$ exists whenever $[A]_{ij} \neq 0$ or $[B]_{ij} \neq 0$. Let $\mathcal{A}\in \{0,1\}^{N \times N}$ be the adjacency matrix for this graph, where by convention $\mathcal{A}_{ii} = 0$. For a state-feedback controller $u_k = Kx_k$, we can also partition $K$ into subsystems. Here, row $[K]_{i,:}$ is the sub-controller at subsystem $i$, and $[K]_{ij}$ represents information required by sub-controller $i$ from sub-controller $j$. We can similarly build a controller adjacency matrix $\mathcal{K}(K)$, where $\mathcal{K}(K)_{ij} = 1$ wherever $[K]_{ij} \neq 0$, and $\mathcal{K}_{ii} = 0$ by convention. 

The number of actuators used by the controller is $N_a(K) := \|K\|_{1,0}$, i.e., the number of nonzero rows in $K$. Similarly, the number of sensors used by the controller is $N_s(K) := \|K\|_{0,1}$, i.e., the number of nonzero columns in $K$. The number of (inter-subsystem) communication links used by the controller can be written as $N_c(K) = \|\mathcal{K}(K)\|_0$, i.e., the number of nonzeros in the controller adjacency matrix.

The high-level co-design problem is:
\begin{equation} \label{eq:codesign_original}
\begin{aligned}
\min_{K} & \quad J(K) + w_a N_a(K) + w_s N_s(K) + w_c N_c(K) \\
\mathrm{s. t.} & \quad \eqref{eq:plant}, u_k = Kx_k
\end{aligned}
\end{equation}
where $J$ is some control objective, and $w_a$, $w_s$, and $w_c$ are scalar penalties on actuator, sensor, and communication link usage. In other words, we seek a state feedback controller $K$ for plant \eqref{eq:plant} that jointly minimizes the control objective as well as material costs required by this controller.

In this paper, we will consider the case of co-designing a linear quadratic regulator (LQR) via pruning; this is inspired by previous work on the structure and prune-ability of of dense LQR controllers \cite{shin}. Specifically, given state and penalty matrices $Q$ and $R$, we first synthesize an optimal LQR controller $K_\mathrm{d}$ (which is generally dense), then set select entries in $K_\mathrm{d}$ to zero. We now reparameterize \eqref{eq:codesign_original} to reflect this. We first define the optimization vector 
\begin{equation}
\theta = [\,\ell, \;\mathbf{a}, \;\mathbf{s}\,]
\end{equation}
where link count $\ell \in [1, N_u N_x]$ is an integer determining the number of nonzero elements to keep from $K_\mathrm{d}$, actuator mask $\mathbf{a} \in \{0, 1\}^{N_u}$ is a binary vector representing actuator selection, and sensor mask $\mathbf{s} \in \{0, 1\}^{N_x}$ is a binary vector representing sensor selection.

From parameter $\theta$ and dense controller $K_\mathrm{d}$, we obtain the pruned (i.e., sparsified) controller $K_\mathrm{s}(\theta)$. To do so, first sort the nonzero elements of $K_\mathrm{d}$ by magnitude; keep the first $\ell$ elements, and set the rest to zero. Let $\Pi_\ell(K_\mathrm{d})$ denote this process. Then, let $\bar{\mathbf{a}}$ denote $\mathrm{not}(\mathbf{a})$, i.e., $\mathbf{a}$ with ones and zeros flipped (and similarly for $\bar{\mathbf{s}}$). Set rows $K_{\bar{\mathbf{a}},:} = 0$, then set columns $K_{:,\bar{\mathbf{s}}} = 0$. Let $\Pi_{\mathbf{a}, \mathbf{s}}(K_\mathrm{d})$ denote this process.

We choose control objective $J(K) = \frac{J_\mathrm{LQR}(K)}{J_\mathrm{LQR}(K_\mathrm{d})}$, i.e., the LQ performance of controller $K$ relative to the optimal dense controller. Cost $J_\mathrm{LQR}(K)$ is
\begin{equation}
    \begin{aligned}
        J_{\mathrm{LQR}}(K) &= \mathbb{E}_{x_0 \sim \mathcal{N}(0, \Sigma)} \left[ \sum_{k=0}^{\infty} \left( x_k^\top Q x_k + u_k^\top R u_k \right) \right] \\
        & \text{under dynamics } \eqref{eq:plant}, \quad u_k = Kx_k
    \end{aligned}
\end{equation}
This objective is infinite when closed-loop $A+BK$ is unstable. When the closed-loop is stable, the objective can be evaluated by solving the discrete-time Lyapunov equation for the closed-loop system $A+BK$ to obtain a matrix $P$, then taking the trace of $P\Sigma$. This cost goes to infinity when the closed-loop system is unstable. The new co-design problem can be written as:
\begin{equation} \label{eq:codesign_ours}
    \begin{aligned}
        \min_{\theta} & \quad J_\mathrm{EA}(\theta) \\ 
        \text{where} & \quad J_\mathrm{EA}(\theta) = \frac{J_\mathrm{LQR}(K_\mathrm{s}(\theta))}{J_\mathrm{LQR}(K_\mathrm{d})} + w_a N_a(K_\mathrm{s}(\theta)) \\ 
        & \quad \quad + w_s N_s(K_\mathrm{s}(\theta)) + w_c N_c(K_\mathrm{s}(\theta))
    \end{aligned}
\end{equation}

\textbf{Notation.} $\|\cdot\|$ denotes the induced $2$-norm for matrices and the Euclidean norm for vectors; $\|\cdot\|_F$ denotes the Frobenius norm.
%\begin{equation}
%\label{cost_ea}
%\begin{aligned}
%J_{\text{EA}}(\theta) 
%&= \frac{J_{\mathrm{LQR}}(K_{\text{s}}(\theta))}{J_{\text{BM}}} \\
%&\quad + \left(
%w_c \|K_{\text{s}}\|_0
%+ w_a \|a\|_1
%+ w_s \|s\|_1
%\right).
%\end{aligned}
%\end{equation}
%where $w_c, w_s, w_a$ are positive constants representative of co-design penalty. The optimization objective is:
%\begin{equation}
%\begin{aligned}
%\min_{\theta} \quad & J_{\text{EA}}(\theta) \\
%\text{s.t.} \quad &  1 \leq \ell \leq N_u N_x, \\
%&\theta = [\,\ell, \;\mathbf{a}, \;\mathbf{s}\,], \\
%& a_j \in \{0, 1\}, \quad j = 1, \dots, N_u, \\
%& s_i \in \{0, 1\}, \quad i = 1, \dots, N_x, \\
%\text{where} \quad & K_\mathrm{s}(\theta) = \Pi_{\mathbf{a},\mathbf{s}} \left( \text{trunc}%(K_{\mathrm{d}}, \ell) \right).
%\end{aligned}
%\label{eq:optimization_problem}
%\end{equation}

\section{Evolutionary algorithm for co-design} \label{sec:ea}

We now propose an evolutionary algorithm (EA) to solve \eqref{eq:codesign_ours}. EAs work upon a population of individuals with population size $N_p$. Each individual $i$ is characterized by its gene (i.e., parameter) $\theta^i$. The population at generation $t$ is written as $\mathcal{P}_t := \{\theta^1, \theta^2, \ldots \theta^{N_p} \}$. Our population $\mathcal{P}_0$ is initialized with randomly generated individuals; for each individual, we set $\ell = \|K_{\mathrm{d}}\|_0$, and randomly draw elements in $\mathbf{a}$ and $\mathbf{s}$ from a Bernoulli distribution. Then, at each generation $t$, we evaluate the cost $J_\mathrm{EA}(\theta^i)$ of each individual $\theta^i$ in population $\mathcal{P}_t$. The $n_e$ individuals with the lowest cost are directly carried over to the next generation's population $\mathcal{P}_{t+1}$. The remainder of the next generation's population are generated using the following operations:
\begin{enumerate}
    \item \textbf{Selection:} The operator $\text{Selection}(\mathcal{P}_t, \tau)$ samples two distinct individuals (i.e., parents) from population $\mathcal{P}_t$. The probability of individual $i$ being chosen is
    \begin{equation}
    P(i) = \frac{\exp(-J_\mathrm{EA}(\theta^i)/\tau)}{\sum_{\theta^j \in \mathcal{P}_t} \exp(-J_\mathrm{EA}(\theta^j)/\tau)}\end{equation}
    where $\tau$ is the temperature parameter. Individuals with lower cost are more likely to be chosen.

    \item \textbf{Crossover:} The operator $\text{Crossover}(\theta^{p_1}, \theta^{p_2}, p_c)$ takes genes from two distinct individuals (i.e., parents) and crossover probability $p_c$, and generates a child gene. First, draw $k$ from uniform distribution $\mathcal{U}(1, N_\theta-1)$ where $N_\theta$ denotes the length of the gene vector. Then, generate the child gene $\theta^c$ as
    $\theta^c = [\theta^1_{1:k} \theta^2_{k+1:N_\theta}]$.

    \item \textbf{Mutation:} The operator $\text{Mutation}(\theta, p_m, d)$ takes gene $\theta = [\,\ell, \;\mathbf{a}, \;\mathbf{s}\,]$, mutation probability $p_m$, and mutation range $d$, and generates a mutated gene $\theta^m$. Define mutation vectors $\mathbf{m_a}$ and $\mathbf{m_s}$, where each element $m_j$ of these vectors are drawn from $\text{Bernoulli}(p_m)$. Then, define mutation scalar $\delta$ which is drawn from $\text{Unif}\{-d, d\}$. The mutated gene is $\theta^m = [\mathrm{sat}(\ell_c + \delta), \mathbf{a} \oplus \mathbf{m_a}, \mathbf{s} \oplus \mathbf{m_s}]$, where $\mathrm{sat}(\cdot)$ clips the value to interval $[1, N_u N_x]$, and $\oplus$ denotes the element-wise XOR operation.
\end{enumerate}

$N_p - n_e$ pairs of parents are selected to reproduce via crossover and mutation; the resulting children are added to the subsequent population. This process repeats until the maximum number of generations $G_{\max}$ is reached. At this point, the gene of best-performing individual, denoted $\theta^\star$, is used as the EA solution to \eqref{eq:codesign_ours}. Th overall EA is summarized in Algorithm \ref{alg:ea}. The complexity of the algorithm is $\mathcal{O}(G_{\max} N_p N_x^3)$; this is dominated by cost analysis, which requires either eigenvalue computation or solving a Lyapunov equation for each individual in each generation, both of which have cubic complexity.

\begin{algorithm}[t]
\caption{EA-Based Sparse LQR Controller Co-Design}
\label{alg:ea}
\begin{algorithmic}[1]
\Statex \textbf{Input:} 
\Statex \quad Plant matrices $A, B$
\Statex \quad Objective parameters $Q, R, w_c, w_a, w_s$
\Statex \quad EA parameters $N_p, G_{\max}, p_c, p_m, n_e, \tau, d$ 
\State Compute optimal LQR controller $K_\mathrm{d}$
\State Randomly generate $\mathcal{P}_0 = \{\theta^i\}_{i=1}^{N_p}$
\State \textbf{for} $t = 1$ \textbf{to} $G_{\max}$\textbf{:} 
\State \quad \textbf{for} each $\theta^i \in \mathcal{P}_{t-1}$\textbf{:}
\State \quad \quad Evaluate $J_\mathrm{EA}(\theta^i)$
\State $\mathcal{P}_{t} \gets$ $\{n_e$ lowest-cost individuals from $\mathcal{P}_{t-1}\}$
\State \textbf{for} $j = n_e + 1$ \textbf{to} $N_p$\textbf{:}
\State \quad $\theta^{p_1}, \theta^{p_2} \gets \text{Selection}(\mathcal{P}_{t-1}, \tau)$
\State \quad $\theta^c \gets \text{Crossover}(\theta^{p_1}, \theta^{p_2}, p_c)$
\State \quad $\theta^c \gets \text{Mutation}(\theta, p_m, d)$
\State \quad $\mathcal{P}_{t} \gets \mathcal{P}_{t} \cup \theta^c$
\State $\theta^\star \leftarrow \mathrm{argmin}_{\theta^i \in \mathcal{P}_t}J_\mathrm{EA}(\theta^i)$
\Statex \textbf{Return:} $K_\mathrm{s}(\theta^\star)$
\end{algorithmic}
\end{algorithm}

We demonstrate the efficacy of EA in simulations in Section \ref{sec:simulations}. Now, we discuss the convergence and stability properties of this algorithm.

\section{Convergence of EA co-design} \label{sec:convergence}
We provide an approximate analysis for the convergence of the EA. This will be done by leveraging properties of LQR truncations previously derived in \cite{shin}.
%In this section, we employ the decay properties established in Lemma \ref{lemm:decay} and the Lipschitz properties of $J_{\mathrm{LQR}}$ to derive the truncation bounds for the EA-optimized controller. Subsequently, we provide a probabilistic analysis of the algorithm's convergence rate.
First, we introduce some relevant definitions and assumptions.
%First, we begin with some definitions and assumptions for the graphical topology plants, which is used to ensure \ref{lemm:decay} and \ref{lemm:lipschitz}.

\begin{definition}
\label{def:stability}
Let $L > 0$ and $\alpha \in [0, 1)$. Then,
\begin{itemize}
    \item[(a)] $\Phi$ is \emph{$(L, \alpha)$-stable} if $\|\Phi^k\| \leq L\alpha^k \quad \forall k \geq 0$.
    \item[(b)] $(A, B)$ is \emph{$(L, \alpha)$-stabilizable} if $\exists K$ such that $\|K\| \leq L$ and $A + BK$ is $(L, \alpha)$-stable.
    \item[(c)] $(A, C)$ is \emph{$(L, \alpha)$-detectable} if $(A^\top, C^\top)$ is $(L, \alpha)$-stabilizable.
\end{itemize}
\end{definition}

\begin{assumption}
\label{as:system_params} $\exists L > 1$, $\alpha \in (0, 1)$, and $\gamma \in (0, 1)$ such that:
\begin{itemize}
    \item[(a)] $\|A\|, \|B\|, \|Q\|, \|R\| \leq L$
    \item[(b)] $R \succeq \gamma I$
    \item[(c)] $(A, B)$ is $(L, \alpha)$-stabilizable
    \item[(d)] $Q \succeq 0$, and $(A, Q^{1/2})$ is $(L, \alpha)$-detectable.
\end{itemize}
\end{assumption}

These are inherited from \cite{shin} and will be assumed to hold for the remainder of the paper. We note that Assumption \ref{as:system_params} is merely a more precise statement of the standard assumption of stabilizability for linear systems. Now, we recall the following result from \cite{shin} on the spatial decay of the optimal LQR gain:

\begin{lemma} \label{lemm:decay} Let $d_{\mathcal{G}}(i, j)$ denote the number of edges in the shortest path between nodes $i$ and $j$ in the system graph $\mathcal{G} = (\mathcal{V}, \mathcal{E})$. Then, optimal LQR gain $K_\mathrm{d}$ satisfies $\|K_{\mathrm{d},{ij}}\| \leq \Upsilon \rho^{d_g(i,j)}$, where $\rho$ is a constant in $(0, 1)$ and $\Upsilon$ is a constant lower-bounded by 1. 
\end{lemma}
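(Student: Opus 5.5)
The plan is to follow the standard argument behind this result in \cite{shin}. It has three parts: write $K_\mathrm{d}$ in terms of the stabilizing Riccati solution, show that the relevant operators are banded with respect to $\mathcal{G}$, and show that the solution map is exponentially well-conditioned. Concretely, $K_\mathrm{d} = -(R + B^\top P B)^{-1} B^\top P A$, where $P$ is the stabilizing solution of the discrete algebraic Riccati equation. I would avoid working through the Riccati equation entry by entry. Instead I would view the infinite-horizon LQR problem as a structured quadratic program and use the exponential decay of sensitivity for such programs, which is the key tool in \cite{shin}.

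\textbf{Step 1 (uniform bounds).} Using Assumption \ref{as:system_params}(c),(d), I would bound $\|P\| \le \bar P(L,\alpha)$. Upper-bounding $J_\mathrm{LQR}$ by the cost of the stabilizing $K$ from Definition \ref{def:stability}(b) gives $\|P\| \le L^3(1+L^2)/(1-\alpha^2)$. Detectability, together with $R \succeq \gamma I$, then gives a lower bound on the closed-loop decay rate: the closed loop $A + BK_\mathrm{d}$ is $(L', \alpha')$-stable with constants depending only on $L, \alpha, \gamma$.

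\textbf{Step 2 (finite-horizon QP and banded KKT).} Fix the initial state $x_0 = e_j$, a unit vector supported on subsystem $j$. The optimal first input is $u_0 = K_\mathrm{d} x_0$, so $[K_\mathrm{d}]_{ij}$ is the sensitivity of $[u_0]_i$ to $[x_0]_j$. I would write the horizon-$T$ problem as a QP whose KKT matrix $H$ has the block structure of the product graph $\mathcal{G} \times \{\text{time}\}$. This holds because $[A]_{ij}$ and $[B]_{ij}$ are nonzero only on edges of $\mathcal{G}$, with block-diagonal $Q$ and $R$ assumed as in \cite{shin}. Step 1 gives uniform bounds $\|H\| \le \bar\sigma$ and $\sigma_{\min}$ of the reduced Hessian $\ge \underline\sigma$. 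The standard result on inverses of banded, well-conditioned matrices then applies (a Demko--Moss--Smith type bound, or the exponential decay of sensitivity lemma from \cite{shin}). It gives $\|[H^{-1}]_{(i,t),(j,s)}\| \le \Upsilon \rho^{d_\mathcal{G}(i,j) + |t-s|}$, with $\rho = \left(\frac{\bar\sigma - \underline\sigma}{\bar\sigma + \underline\sigma}\right)^{1/2}$ or similar, where $\rho \in (0,1)$. One can enlarge $\Upsilon$ so that $\Upsilon \ge 1$.

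\textbf{Step 3 (limit $T \to \infty$).} The finite-horizon gain converges to $K_\mathrm{d}$, and uniform exponential stability makes this convergence exponential in $T$. The bound from Step 2 is uniform in $T$, so it passes to the limit and gives $\|[K_\mathrm{d}]_{ij}\| \le \Upsilon \rho^{d_\mathcal{G}(i,j)}$.

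The main obstacle is Step 2. It requires obtaining the lower bound $\underline\sigma$ on the KKT system uniformly in $T$ using only stabilizability and detectability, rather than controllability or $Q \succ 0$. I would handle this as \cite{shin} does, by using the stabilizing $K$ and detecting gain from Assumption \ref{as:system_params} to build explicit feasible directions that certify the bound. Since this is exactly \cite{shin}'s theorem, I would ultimately cite it for the constants.
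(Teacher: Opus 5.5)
Your proposal is correct and takes essentially the paper's route. The paper gives no proof of its own and simply recalls the bound from \cite{shin}; your outline (finite-horizon KKT system banded on the time-expanded graph, uniform conditioning from stabilizability and detectability, exponential decay of sensitivity, then $T\to\infty$) is a faithful sketch of that cited argument, and you rightly state the structural assumption on $Q$ and $R$ that the paper's statement leaves implicit. One small slip is the exponent $d_{\mathcal{G}}(i,j)+|t-s|$: because a single time step can also move one hop in space, the time-expanded distance behaves like $\max\{d_{\mathcal{G}}(i,j),|t-s|\}$ rather than the sum, but this is harmless since you only use the $t=s=0$ block, where the distance is at least $d_{\mathcal{G}}(i,j)$.
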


Values for $\rho$ and $\Upsilon$ are provided in \cite{shin}; for our purposes, it suffices to know their existence. Now, we recall a result from \cite{fazel2018global} on the Lipschitz continuity of LQR cost:

\begin{lemma} \label{lemm:lipschitz}
For every sublevel set $\mathcal{S} = \{K: J_{\mathrm{LQR}}(K) \leq c,\}$, $\exists L_J > 0$ such that for all $K_1, K_2 \in \mathcal{S}$,
\begin{equation}
\|\nabla_K J_{\mathrm{LQR}}(K_1) - \nabla_K J_{\mathrm{LQR}}(K_2)\|_F 
\leq L_J \|K_1 - K_2\|_F
\end{equation}
%Furthermore, $\forall K \in \mathcal{S}$, the spectral radius of the closed loop, $\rho(A+BK) \leq 1 - \delta$ for some $\delta > 0$ depending on $c$.
\end{lemma}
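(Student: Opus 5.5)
The statement is recalled from \cite{fazel2018global}, but I would prove it directly by combining compactness of the sublevel set with smoothness of $J_{\mathrm{LQR}}$ on the open set of stabilizing gains. Throughout I assume $\Sigma \succ 0$. Let $\mathcal{D}$ be the set of $K$ for which $A+BK$ is Schur stable. On $\mathcal{D}$ we have $J_{\mathrm{LQR}}(K) = \mathrm{tr}(P_K \Sigma)$, where $P_K$ solves
\begin{equation*}
P_K = Q + K^\top R K + (A+BK)^\top P_K (A+BK).
\end{equation*}
Since $\mathrm{vec}(P_K) = (I - (A+BK)^\top \otimes (A+BK)^\top)^{-1}\mathrm{vec}(Q+K^\top R K)$, $P_K$ is a rational function of $K$ with nonvanishing denominator on $\mathcal{D}$. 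Hence $J_{\mathrm{LQR}}$ is real-analytic on the open set $\mathcal{D}$, and in particular its Hessian is continuous there.

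\emph{Step 1: $\mathcal{S}$ is compact and lies inside $\mathcal{D}$.} For boundedness, $P_K \succeq K^\top R K$ gives $J_{\mathrm{LQR}}(K) \ge \gamma\,\sigma_{\min}(\Sigma)\|K\|_F^2$ by Assumption \ref{as:system_params}(b), so $\mathcal{S}$ is bounded.

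For closedness, I will show $J_{\mathrm{LQR}}(K_n)\to\infty$ whenever $K_n \in \mathcal{D}$ converges to a point $\bar K \in \partial\mathcal{D}$. At $\bar K$, $A+B\bar K$ has an eigenpair $(\lambda,v)$ with $|\lambda|=1$. Write
\begin{equation*}
P_K = \sum_{k\ge 0} \big((A+BK)^k\big)^\top (Q+K^\top RK)(A+BK)^k .
\end{equation*}
\begin{itemize}
\item If $\bar K v \neq 0$, the $R$-terms evaluated along $v$ do not decay, so the truncated sums grow without bound.
\item If $\bar K v = 0$, then $Av = \lambda v$, and $(A,Q^{1/2})$-detectability (Assumption \ref{as:system_params}(d)) forces $Q^{1/2}v \neq 0$, so the $Q$-terms do not decay.
\end{itemize}
In either case, continuity of finite truncations of the sum together with $\Sigma\succ 0$ gives divergence. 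Combined with continuity of $J_{\mathrm{LQR}}$ on $\mathcal{D}$, this shows $\mathcal{S}$ is closed, hence compact. The same argument applied to $\mathcal{S}' := \{K : J_{\mathrm{LQR}}(K)\le c+1\}$ shows $\mathcal{S}'$ is also compact and contained in $\mathcal{D}$.

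\emph{Step 2: a uniform margin.} Because $J_{\mathrm{LQR}}$ is continuous on $\mathcal{D}$ and $\mathcal{S}$ is compact, there is $\varepsilon>0$ such that the $\varepsilon$-neighborhood of $\mathcal{S}$ lies in $\mathcal{S}'$. Concretely, $\mathcal{S}$ has positive distance to the closed set $\{J_{\mathrm{LQR}}\ge c+1\}\cup \mathcal{D}^c$, using uniform continuity on a compact neighborhood of $\mathcal{S}$. Set $H := \max_{K\in\mathcal{S}'}\|\nabla^2 J_{\mathrm{LQR}}(K)\|$ and $G := \max_{K\in\mathcal{S}}\|\nabla J_{\mathrm{LQR}}(K)\|_F$; both are finite by continuity on compact sets.

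\emph{Step 3: case split.} Take $K_1,K_2\in\mathcal{S}$.
\begin{itemize}
\item If $\|K_1-K_2\|_F<\varepsilon$, the whole segment between them lies in the $\varepsilon$-ball around $K_1$, hence in $\mathcal{S}'$. The mean value inequality then gives a gradient difference of at most $H\|K_1-K_2\|_F$.
\item If $\|K_1-K_2\|_F\ge\varepsilon$, the gradient difference is at most $2G \le (2G/\varepsilon)\|K_1-K_2\|_F$.
\end{itemize}
Taking $L_J=\max\{H,2G/\varepsilon\}$ proves the claim.

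The main obstacle is that $\mathcal{S}$ is generally nonconvex, so a segment between $K_1$ and $K_2$ can leave $\mathcal{S}$ (and even $\mathcal{D}$). Steps 2 and 3 exist to work around this. The delicate part is Step 1's closedness, i.e., that the cost blows up at the stability boundary, which is where the detectability assumption is essential. If an explicit constant were wanted instead, I would follow \cite{fazel2018global}: write $\nabla J_{\mathrm{LQR}}(K) = 2\big((R+B^\top P_K B)K + B^\top P_K A\big)\Sigma_K$, and bound $\|P_{K_1}-P_{K_2}\|$ and $\|\Sigma_{K_1}-\Sigma_{K_2}\|$ through Lyapunov perturbation estimates in terms of $c$.
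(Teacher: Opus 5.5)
Your proof is correct, and it takes a genuinely different route from the one the paper relies on. The paper gives no argument of its own and defers to \cite{fazel2018global}, where smoothness is quantitative. There, Lyapunov perturbation bounds on $P_K$ and $\Sigma_K$ show that if $J_{\mathrm{LQR}}(K)\le c$ and $\|K'-K\|$ is below an explicit radius depending on $c$, then the gradient changes by at most an explicit constant times $\|K'-K\|$. Those constants are polynomial in $c$, $1/\sigma_{\min}(\Sigma)$, $1/\sigma_{\min}(Q)$, $\|A\|$, $\|B\|$ and $\|R\|$. You argue qualitatively instead: analyticity of $J_{\mathrm{LQR}}$ on the open set $\mathcal{D}$; coercivity from $R\succeq\gamma I$ plus blow-up at $\partial\mathcal{D}$ from detectability, giving compactness of $\mathcal{S}$ and $\mathcal{S}'$; a positive gap $\varepsilon$ between $\mathcal{S}$ and the closed set where $J_{\mathrm{LQR}}\ge c+1$; and a near/far case split.

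Your route has several advantages. It is self-contained, and it works under Assumption~\ref{as:system_params}(d) as stated, with $Q\succeq 0$ only detectable; the cited bounds go through $\sigma_{\min}(Q)$ and so presuppose $Q\succ 0$. Your case split also handles the nonconvexity of $\mathcal{S}$ explicitly, which a bare citation of local perturbation bounds leaves implicit. You also rightly make $\Sigma\succ 0$ explicit, which the paper needs but never states.

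The price is that your $L_J=\max\{H,2G/\varepsilon\}$ is an existence result with no handle on its size. The paper, by contrast, substitutes $L_J$ into $\Phi(h)$ and $h^*$ (Lemma~\ref{lemm:h_crit}) and plots the resulting bound numerically. For that purpose one needs the explicit perturbation route you mention at the end.
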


By definition, $K_\mathrm{d}\in\mathcal{S}$ is a first-order stationary point of the LQR objective, i.e.,$\nabla_K J_{\mathrm{LQR}}(K_\mathrm{d})=0$. Then $\forall K\in\mathcal{S}$,
\begin{equation}
J_{\mathrm{LQR}}(K)-J_{\mathrm{LQR}}(K_\mathrm{d})\le \frac{L_J}{2}\|K - K_\mathrm{d}\|_F^2.
\end{equation}

We now proceed with convergence analysis. For analytical simplicity, we choose to focus only on two features of the optimization objective \eqref{eq:codesign_ours}: the control objective $J(K) = \frac{J_\mathrm{LQR}(K)}{J_\mathrm{LQR}(K_\mathrm{d})}$ and the communication co-design objective $w_c N_c(K_\mathrm{s}(\theta))$. For the remainder of this section, we assume $\mathbf{a} = \mathbf{1}$ and $\mathbf{s} = \mathbf{1}$ where $\mathbf{1}$ is the ones vector, i.e., all actuators and sensors are retained. Pruning (i.e., sparsifying) $K$ typically has opposite effects on the two objectives we consider; it increases $J(K)$ while decreasing co-design costs. In simulation (Section \ref{sec:simulations}), we see that the provided bound is quite close to the true convergence rate even when the EA uses the full optimization objective with $\mathbf{a}$ and $\mathbf{s}$. We now define some additional EA-related quantities.

%The additional sparsity rewards from removing actuators and sensors ($w_a \|\mathbf{a}\|_1 + w_s \|\mathbf{s}\|_1$) are deliberately excluded. Since these omitted terms can only further reduce $J_{\text{EA}}$, the convergence bounds derived below constitute a worst-case lower bound on the actual rate of improvement: the full co-design objective $J_{\text{EA}}(\theta)$ converges at least as fast as the communication-link-only analysis suggests.

\begin{definition} \label{def:ea_trajectory}
For each generation $t$, define
\begin{itemize}
    \item[(a)] \emph{Best individual} $\theta_t^*$, where $\theta_t^* := \mathrm{argmin}_{\theta \in \mathcal{P}_t} F(\theta)$, with link count 
      $\ell_t^*$ and controller $K_t := K_{\mathrm{s}}(\theta_t^*)$.
    \item[(b)] \emph{Link reduction} $X_t$, where for $t \geq 1$,
      \begin{equation}\label{eq:Xt}
        X_t := \ell_{t-1}^* - \ell_t^*, 
        \quad |X_t| \leq d.
      \end{equation}
    \item[(c)] \emph{Effective truncation distance} $h_t$. Define simplified gene $\hat{\theta} := [\ell^*_t, \mathbf{1}^\top, \mathbf{1}^\top]$, and build controller adjacency matrix $\mathcal{K}(K_\mathrm{s}(\hat{\theta}))$ with edges $\mathcal{E}_\mathcal{K}$. Then, $h_t$ is the largest integer such that for all edges $(i,j)$ with distance $d_{\mathcal{G}}(i,j) \leq h_t$, $(i,j) \in \mathcal{E}_\mathcal{K}$.
\end{itemize}
\end{definition}
Note that by construction, lowest-cost individual in each generation of an EA will always carry over into the next generation, so $\theta_t^* \in \mathcal{P}_{t+1}$ and $J_\mathrm{EA}(\theta_t^*) \geq J_\mathrm{EA}(\theta_{t+1}^*)$. We now present a series of results that are required to get to the final convergence result (Theorem \ref{thm:conv}).

\begin{lemma} \label{lemm:h_crit}
Define the \emph{cost-rate function}
\begin{equation}\label{eq:Phi}
  \Phi(h) := \frac{L_J\,\Upsilon^2\,\rho^{2h}}
    {J_{\mathrm{LQR}}(K_\mathrm{d})}
    \!\left(\sqrt{N_u N_x}
    +\frac{1}{2}\right)
\end{equation}
and the \emph{critical truncation distance}
\begin{equation}\label{eq:h_star}
  h^* := \left\lfloor
    \frac{1}{2|\!\log\rho|}
    \log\frac{L_J\,\Upsilon^2\,
    (\sqrt{N_u N_x}+\tfrac{1}{2})}
    {w_c\,J_{\mathrm{LQR}}(K_\mathrm{d})}
  \right\rfloor.
\end{equation}
Then, when $h_t > h^*$, $\Phi(h_t) < w_c$ and pruning additional link(s) yields a net decrease in $J_{\mathrm{EA}}$. Conversely, when $h_t \leq h^*$, pruning additional link(s) does not decrease $J_{\mathrm{EA}}$. The proof is in \ref{prop:net_improve}.
\end{lemma}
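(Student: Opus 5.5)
The plan is to compare the marginal change in the two retained terms of $J_{\mathrm{EA}}$ when one additional link is pruned from the current best controller $K_t$, with all actuators and sensors kept. Pruning removes at least one entry of $\mathcal{K}(K_\mathrm{s})$, so the communication term decreases by at least $w_c$; the conditions $w_c$ versus $\Phi(h_t)$ then follow from bounding the increase in the control term.

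\textbf{Step 1 (size of pruned entries).} By Definition \ref{def:ea_trajectory}(c), every pair $(i,j)$ with $d_{\mathcal{G}}(i,j)\le h_t$ is retained. Since $\Pi_\ell$ keeps entries in decreasing order of magnitude, any entry that has already been pruned, or that the next pruning step would remove, lies at distance at least $h_t+1$. By Lemma \ref{lemm:decay}, the magnitude of such an entry is at most $\Upsilon\rho^{h_t}$; the exponent $h_t$ keeps the bound conservative. Let $E_t := K_t - K_\mathrm{d}$ be the truncation error. It has at most $N_uN_x$ entries, each bounded by $\Upsilon\rho^{h_t}$, so
\[
\|E_t\|_F\le \sqrt{N_uN_x}\,\Upsilon\rho^{h_t}.
\]
Let $\Delta$ be the newly pruned entry, with $\|\Delta\|_F\le\Upsilon\rho^{h_t}$.

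\textbf{Step 2 (cost increment).} I will use the smoothness inequality from Lemma \ref{lemm:lipschitz} in its descent-lemma form, evaluated at $K_t$ rather than at $K_\mathrm{d}$:
\[
J_{\mathrm{LQR}}(K_t+\Delta)-J_{\mathrm{LQR}}(K_t)\le \langle\nabla J_{\mathrm{LQR}}(K_t),\Delta\rangle+\tfrac{L_J}{2}\|\Delta\|_F^2 .
\]
Because $\nabla J_{\mathrm{LQR}}(K_\mathrm{d})=0$, Lemma \ref{lemm:lipschitz} gives $\|\nabla J_{\mathrm{LQR}}(K_t)\|_F\le L_J\|E_t\|_F$. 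Combining this with Step 1 and Cauchy--Schwarz yields an increment of at most
\[
L_J\Upsilon^2\rho^{2h_t}\Big(\sqrt{N_uN_x}+\tfrac12\Big).
\]
Dividing by $J_{\mathrm{LQR}}(K_\mathrm{d})$ gives exactly $\Phi(h_t)$. Hence the change in $J_{\mathrm{EA}}$ is at most $\Phi(h_t)-w_c$.

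\textbf{Step 3 (threshold).} The function $\Phi$ is strictly decreasing in $h$, since $\rho<1$. Solving $\Phi(h)=w_c$ gives
\[
h=\frac{1}{2|\log\rho|}\log\frac{L_J\Upsilon^2(\sqrt{N_uN_x}+1/2)}{w_cJ_{\mathrm{LQR}}(K_\mathrm{d})},
\]
and $h^*$ is its floor. So if $h_t>h^*$, then $h_t$ exceeds the real root, $\Phi(h_t)<w_c$, and the net change is negative. If $h_t\le h^*$, then $\Phi(h_t)\ge w_c$, so the upper bound no longer guarantees a decrease.

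\textbf{Main obstacle.} The converse claim is that pruning does not decrease $J_{\mathrm{EA}}$, and this requires a \emph{lower} bound on the cost increase, which the smoothness argument does not supply. I would handle it the way the paper's ``approximate analysis'' framing suggests: treat $\Phi$ as the tight, worst-case estimate of the marginal cost, so that $h^*$ is the point past which no improvement can be certified. Two further points need care:
\begin{itemize}
\item Lemma \ref{lemm:lipschitz} applies only on a sublevel set, so I would fix $c\ge J_{\mathrm{LQR}}(K_t)$ and assume $K_t$ and $K_t+\Delta$ both lie in $\mathcal{S}$. This holds because $J_{\mathrm{EA}}$ is nonincreasing along the elitist trajectory.
\item If one pruning step removes several links, both the cost increment and the $w_c$ decrease scale per link, so the same comparison applies to each link.
\end{itemize}
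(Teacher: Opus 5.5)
Your forward argument follows the paper's proof: it is Lemma~\ref{lemm:one_step} (descent inequality at the current best, gradient bound from $\nabla J_{\mathrm{LQR}}(K_\mathrm{d})=0$, entrywise decay, Cauchy--Schwarz), then Proposition~\ref{prop:net_improve} and solving $\Phi(h)=w_c$. Your diagnosis of the converse is correct, and the paper does not close that gap either. It only records that $h^*$ is the root of $\Phi(h)=w_c$, so for $h_t\le h^*$ what is established is that a decrease can no longer be certified, not that none occurs. You and the paper both credit $w_c$ per pruned entry, which tacitly identifies $N_c$ with the entry count $\ell$. Under the block-level definition of $\mathcal{K}(K)$, a pruned entry may leave $N_c$ unchanged and never lowers it by more than one. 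So ``decreases by at least $w_c$'' does not follow from the definitions.

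The step that fails as you justify it is the sublevel-set bullet. Elitism gives $J_{\mathrm{LQR}}(K_t)\le J_{\mathrm{LQR}}(K_\mathrm{d})\,J_{\mathrm{EA}}(\theta_0^*)$, which puts the current best in a fixed $\mathcal{S}$. It says nothing about the candidate $K_t+\Delta$. That is exactly the controller whose cost you are bounding, so appealing to monotonicity is circular, and the candidate can in fact be destabilizing. The descent-lemma form also needs the whole segment $[K_t,K_t+\Delta]$ to lie in $\mathcal{S}$, and sublevel sets of $J_{\mathrm{LQR}}$ need not be convex.

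The paper avoids the circularity by stating Lemma~\ref{lemm:one_step} only for two consecutive elitist bests, though it too passes over the segment condition. When it applies the bound to an offspring (Proposition~\ref{prop:p_imp}), it carries membership as the explicit factor $\mathbb{P}[K_t\in\mathcal{S}]$, with Theorem~\ref{thm:lyap-stab} as the sufficient condition. You should make membership a hypothesis as well. For example, suppose $K_t$ and $K_t+\Delta$ are both within $\sigma_{\mathrm{crit}}$ of $K_\mathrm{d}$ in Frobenius norm. Then the whole segment is, since the ball is convex, and Theorem~\ref{thm:lyap-stab} makes every point on it $(\Omega,\beta)$-stable, hence in a common sublevel set.

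There are two smaller slips. First, in Step 1 the entry removed next is the smallest retained one and need not lie at distance $\ge h_t+1$; it may sit at distance $\le h_t$. The bound $\Upsilon\rho^{h_t}$ still holds by magnitude ordering, because that entry is no larger than a retained entry at distance $h_t$. Second, several links should be handled in one shot, as the paper does via $\|\Delta\|_F\le\sqrt{X_t}\,\Upsilon\rho^{h_t}$ and $\sqrt{X_t}\le X_t$, rather than link by link, because $h$ can drop after each single removal.
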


\begin{lemma}
\label{lemm:one_step}
Define $K_{t-1} := K_\mathrm{s}(\theta^*_{t-1})$ and $K_{t} := K_\mathrm{s}(\theta^*_{t})$. If
$X_t \geq 1$, then,
\begin{equation}\label{eq:one_step}
  \frac{J_{\mathrm{LQR}}(K_t)
    -J_{\mathrm{LQR}}(K_{t-1})}
    {J_\mathrm{LQR}(K_\mathrm{d})}
  \;\leq\;
  \Phi(h_{t-1})\,X_t.
\end{equation}
\end{lemma}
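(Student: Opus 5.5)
The plan is to compare $K_t$ and $K_{t-1}$ directly and run a second-order upper bound around $K_{t-1}$ (not around $K_\mathrm{d}$), using Lemma \ref{lemm:lipschitz} for the curvature term and Lemma \ref{lemm:decay} to size the perturbations. Since $\mathbf{a}=\mathbf{s}=\mathbf{1}$ throughout this section, $K_\mathrm{s}(\theta)=\Pi_\ell(K_\mathrm{d})$. Magnitude truncation is nested, so when $X_t\ge 1$ (i.e.\ $\ell_t^*<\ell_{t-1}^*$), the matrix $\Delta := K_t-K_{t-1}$ has exactly $X_t$ nonzero entries. These are the negatives of the entries of $K_\mathrm{d}$ ranked $\ell_t^*+1,\dots,\ell_{t-1}^*$ in magnitude.

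The first step is to put $K_{t-1},K_t$ in a common sublevel set $\mathcal{S}$. Elitism guarantees $J_\mathrm{EA}(\theta_{t-1}^*)<\infty$, so $K_{t-1}$ is stabilizing. I would take $c$ large enough to contain $K_\mathrm{d},K_{t-1},K_t$; if $K_t$ were destabilizing it would not be the best individual. This also requires the segment between $K_{t-1}$ and $K_t$ to lie in $\mathcal{S}$, which I will take as part of the approximate nature of the analysis.

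Then the descent-lemma form of Lemma \ref{lemm:lipschitz} gives
\begin{equation*}
J_{\mathrm{LQR}}(K_t)-J_{\mathrm{LQR}}(K_{t-1}) \le \|\nabla J_{\mathrm{LQR}}(K_{t-1})\|_F\|\Delta\|_F + \tfrac{L_J}{2}\|\Delta\|_F^2 .
\end{equation*}
Because $\nabla J_{\mathrm{LQR}}(K_\mathrm{d})=0$, Lemma \ref{lemm:lipschitz} again gives $\|\nabla J_{\mathrm{LQR}}(K_{t-1})\|_F\le L_J\|K_{t-1}-K_\mathrm{d}\|_F$.

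The key step, and the one I expect to be the main obstacle, is bounding every pruned entry (both those already absent from $K_{t-1}$ and the $X_t$ newly removed ones) by $\Upsilon\rho^{h_{t-1}}$. For entries at graph distance greater than $h_{t-1}$, Lemma \ref{lemm:decay} gives this directly. The difficulty is the newly removed entries that could sit at distance $\le h_{t-1}$. Here I would use the magnitude ordering. By definition of $h_{t-1}$, some pair at distance $h_{t-1}+1$ is pruned in $K_{t-1}$. In the decay-dominated regime, magnitudes are ordered consistently with distance, so the next entries removed by truncation have magnitude comparable to that pruned entry, namely at most $\Upsilon\rho^{h_{t-1}}$. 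I would state this as the approximation underlying the analysis rather than try to prove it for arbitrary $K_\mathrm{d}$.

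Granting it, there are at most $N_uN_x$ pruned entries, so $\|K_{t-1}-K_\mathrm{d}\|_F\le\sqrt{N_uN_x}\,\Upsilon\rho^{h_{t-1}}$. Similarly, $\|\Delta\|_F\le\sqrt{X_t}\,\Upsilon\rho^{h_{t-1}}$, so $\|\Delta\|_F^2\le X_t\Upsilon^2\rho^{2h_{t-1}}$.

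Substituting, and using $\sqrt{X_t}\le X_t$ for integer $X_t\ge1$, the increase is at most
\begin{equation*}
L_J\Upsilon^2\rho^{2h_{t-1}}\Bigl(\sqrt{N_uN_x}+\tfrac12\Bigr)X_t .
\end{equation*}
Dividing by $J_{\mathrm{LQR}}(K_\mathrm{d})$ yields exactly $\Phi(h_{t-1})X_t$, as in \eqref{eq:Phi}.
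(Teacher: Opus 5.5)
Your proposal follows the paper's proof essentially step for step: the smoothness (descent-lemma) bound expanded about $K_{t-1}$, the gradient bound $\|\nabla J_{\mathrm{LQR}}(K_{t-1})\|_F \le L_J\|K_{t-1}-K_\mathrm{d}\|_F \le L_J\sqrt{N_uN_x}\,\Upsilon\rho^{h_{t-1}}$ using $\nabla J_{\mathrm{LQR}}(K_\mathrm{d})=0$, the estimate $\|K_t-K_{t-1}\|_F\le\sqrt{X_t}\,\Upsilon\rho^{h_{t-1}}$, then Cauchy--Schwarz and $\sqrt{X_t}\le X_t$. The paper asserts the entrywise bound $\Upsilon\rho^{h_{t-1}}$ on removed entries (and membership in a common sublevel set) directly from Lemmas~\ref{lemm:decay} and~\ref{lemm:lipschitz}, so you are right to flag these as approximations; note, however, that comparing with an already-pruned entry at distance $h_{t-1}+1$ only bounds the newly removed entries from \emph{below}, so the upper bound rests entirely on your distance-ordering assumption.
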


\begin{proof}
By Lemma~\ref{lemm:lipschitz}, we can choose constant $L_J$ such that
\begin{align}
  &J_{\mathrm{LQR}}(K_t)
    -J_{\mathrm{LQR}}(K_{t-1})
  \notag\\
  &\leq
  \langle \nabla_K J(K_{t-1}),\,
    K_t\!-\!K_{t-1}\rangle
  + \tfrac{L_J}{2}
    \|K_t\!-\!K_{t-1}\|_F^2.
\end{align}
Since $\mathrm{supp}(K_t)
\subset\mathrm{supp}(K_{t-1})$,
the $X_t$ removed entries each satisfy
$|K_{\mathrm{d}}(i,j)|
\leq \Upsilon\rho^{h_{t-1}}$
by Lemma~\ref{lemm:decay}, so
\begin{equation}\label{eq:diff_bound}
  \|K_t-K_{t-1}\|_F
  \leq \sqrt{X_t}\;\Upsilon\rho^{h_{t-1}}.
\end{equation}
For the gradient term, $\nabla J(K_{\mathrm{d}})=0$
gives
\[
  \|\nabla J(K_{t-1})\|_F
  \leq L_J\|K_{t-1}-K_{\mathrm{d}}\|_F
  \leq L_J\sqrt{N_uN_x}\;
    \Upsilon\rho^{h_{t-1}}.
\]
Combining via Cauchy--Schwarz and using
$\sqrt{X_t}\leq X_t$ for $X_t\geq 1$:
\begin{align*}
  &J_{LQR}(K_t)-J_{LQR}(K_{t-1})\\
  &\leq L_J\Upsilon^2\rho^{2h_{t-1}}
    \!\left(\sqrt{N_uN_x}+\tfrac{1}{2}
    \right)X_t.
\end{align*}
Dividing by $J_{LQR}(K_d)$ obtains
$\Phi(h_{t-1})\,X_t$.
\end{proof}

\begin{proposition}
\label{prop:net_improve}
If $X_t \geq 1$ and $h_{t-1}>h^*$, then
\begin{equation}\label{eq:net_improve}
  J_{\mathrm{EA}}(\theta_{t-1}^*)
  -J_{\mathrm{EA}}(\theta_t^*)
  \;\geq\;
  \bigl(w_c - \Phi(h_{t-1})\bigr)\,X_t
  \;>\;0.
\end{equation}
In particular, $h^*$ is obtained by
solving $\Phi(h)=w_c$ for $h$, which
yields~\eqref{eq:h_star}.

\end{proposition}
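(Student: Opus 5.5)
The plan is to decompose $J_{\mathrm{EA}}$ into its two active components and bound the change in each separately. Throughout this section $\mathbf{a}=\mathbf{1}$ and $\mathbf{s}=\mathbf{1}$, so
\[
J_{\mathrm{EA}}(\theta)=\frac{J_{\mathrm{LQR}}(K_\mathrm{s}(\theta))}{J_{\mathrm{LQR}}(K_\mathrm{d})}+w_c N_c(K_\mathrm{s}(\theta)),
\]
with the actuator and sensor terms constant. Writing the difference $J_{\mathrm{EA}}(\theta_{t-1}^*)-J_{\mathrm{EA}}(\theta_t^*)$ as a control-cost part plus a communication part, the goal is to show that the communication part is at least $w_c X_t$ and the control-cost part is at least $-\Phi(h_{t-1})X_t$.

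For the control-cost part, I will invoke Lemma~\ref{lemm:one_step} directly. Since $X_t\ge 1$, that lemma gives
\[
\frac{J_{\mathrm{LQR}}(K_{t-1})-J_{\mathrm{LQR}}(K_t)}{J_{\mathrm{LQR}}(K_\mathrm{d})}\ \ge\ -\Phi(h_{t-1})X_t.
\]
For the communication part, I will argue that the $X_t$ entries removed by lowering $\ell$ from $\ell_{t-1}^*$ to $\ell_t^*$ each reduce the link count. Within this section's simplified analysis, a link in $N_c$ is identified with a retained nonzero of $K$. Because $\Pi_\ell$ keeps the top-$\ell$ magnitudes, the supports are nested, so exactly $X_t$ entries disappear and $w_c N_c$ drops by $w_c X_t$.

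Adding the two bounds gives the first inequality in \eqref{eq:net_improve}. For strictness, I need $\Phi(h_{t-1})<w_c$ whenever $h_{t-1}>h^*$. The function $\Phi$ is strictly decreasing in $h$ because $\rho\in(0,1)$. Solving $\Phi(h)=w_c$ gives $\rho^{2h}=w_c J_{\mathrm{LQR}}(K_\mathrm{d})/\bigl(L_J\Upsilon^2(\sqrt{N_uN_x}+\tfrac12)\bigr)$, that is,
\[
h=\frac{1}{2|\log\rho|}\log\frac{L_J\Upsilon^2(\sqrt{N_uN_x}+\tfrac12)}{w_c J_{\mathrm{LQR}}(K_\mathrm{d})},
\]
whose floor is \eqref{eq:h_star}. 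Since $h_{t-1}$ is an integer strictly greater than $h^*$, it is at least $h^*+1$, which exceeds the real root. Monotonicity then yields $\Phi(h_{t-1})<w_c$, and together with $X_t\ge1$ the product $(w_c-\Phi(h_{t-1}))X_t$ is strictly positive. This step also completes the proof of Lemma~\ref{lemm:h_crit}.

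The main obstacle is the communication accounting. Strictly, $N_c$ counts nonzero blocks of $\mathcal{K}(K)$, not individual entries, and it excludes diagonal blocks. Removing an entry therefore reduces $N_c$ only if it empties an off-diagonal block. I will handle this by stating explicitly that the analysis identifies links with retained off-diagonal entries, which is the entry-level surrogate consistent with Lemma~\ref{lemm:one_step}'s per-entry accounting. Alternatively, the claim can be restricted to scalar subsystems, where entries and blocks coincide. A second, lesser caveat is that Lemma~\ref{lemm:one_step} requires both controllers to lie in the sublevel set $\mathcal{S}$. I will assume this as in that lemma, which is justified because elitism keeps $J_{\mathrm{EA}}$, and hence $J_{\mathrm{LQR}}$ of the best individual, bounded.
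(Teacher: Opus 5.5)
Your proposal is correct and follows the paper's proof. You write the change in $J_{\mathrm{EA}}$ as $w_cX_t$ minus the normalized LQR increase, bound the latter by Lemma~\ref{lemm:one_step}, and conclude $\Phi(h_{t-1})<w_c$ from monotonicity of $\Phi$ and the floor in $h^*$; your integer/floor step and your elitism justification of the sublevel-set condition are more careful than the paper's. One small correction: the exact drop $w_cX_t$ requires identifying links with \emph{all} retained nonzeros of $K$, as in your main paragraph and as the paper's proof implicitly does. Your fallback of counting only off-diagonal entries, or restricting to scalar subsystems, loses this whenever a pruned entry lies in a diagonal block, because such entries do not contribute to $N_c$.
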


\begin{proof} From Lemma~\ref{lemm:lipschitz}, we have
\[
  J_{\mathrm{EA}}(\theta_{t-1}^*)
  -J_{\mathrm{EA}}(\theta_t^*)
  = w_c\,X_t
  - \frac{J_{\mathrm{LQR}}(K_t)
    -J_{\mathrm{LQR}}(K_{t-1})}
    {J_\mathrm{LQR}(K_\mathrm{d})}.
\]
By Lemma~\ref{lemm:one_step}, the second
term is at most $\Phi(h_{t-1})\,X_t$.
By Definition~\ref{def:ea_trajectory},
$h_{t-1}>h^*$ implies
$\Phi(h_{t-1})<w_c$.
\end{proof}

\begin{proposition}[Improvement probability]
\label{prop:p_imp}
For each generation $t$ with
$h_{t-1}>h^*$, the probability that a
single offspring strictly improves
$J_{\mathrm{EA}}$ satisfies
\begin{equation}\label{eq:p_imp}
  p_{\mathrm{imp}}(t)
  \;\geq\;
  \frac{(1-p_m)^{N_u+N_x}}
       {N_p\,(2d+1)}
  \;\mathbb{P}\bigl[
    K_t\in\mathcal{S}\bigr].
\end{equation}
\end{proposition}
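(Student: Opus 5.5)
We lower-bound the probability of one specific improving event for a single offspring. Combined with Proposition~\ref{prop:net_improve}, this event forces a strict decrease of $J_{\mathrm{EA}}$. Fix generation $t$ with $h_{t-1}>h^*$, and consider one of the $N_p-n_e$ offspring slots. The event has three independent parts.

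\emph{Step 1 (parent).} Selection returns the elite $\theta_{t-1}^*$ as the parent that supplies the gene's leading segment. By elitism, $\theta_{t-1}^*\in\mathcal{P}_{t-1}$, and its Boltzmann weight $\exp(-J_{\mathrm{EA}}(\theta_{t-1}^*)/\tau)$ is the largest in the population. Its selection probability is therefore at least $1/N_p$. Under the standing assumption $\mathbf{a}=\mathbf{s}=\mathbf{1}$ of this section, every individual shares the same masks. Crossover then cannot change the mask part of the gene, and it leaves the child with link count $\ell_{t-1}^*$ whenever the cut index $k\ge 1$ takes $\ell$ from the first parent. For this reason we charge only the factor $1/N_p$ to the combined selection--crossover step.

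\emph{Step 2 (mutation).} The mask bits must be left unchanged, so that the child remains a pure link pruning of $\theta_{t-1}^*$. Each of the $N_u+N_x$ bits is left alone independently with probability $1-p_m$, which gives the factor $(1-p_m)^{N_u+N_x}$. Independently, the integer shift $\delta\sim\mathrm{Unif}\{-d,\dots,d\}$ must take a value $\delta=-X$ with $X\ge 1$ that actually prunes links. The single value $\delta=-1$ has probability $1/(2d+1)$. It is admissible because $h_{t-1}>h^*$ forces $\ell_{t-1}^*>1$, so saturation does not interfere.

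\emph{Step 3 (improvement).} On the event from Steps 1--2, the child is $\theta^c=[\ell_{t-1}^*-X,\mathbf{1},\mathbf{1}]$. Its controller is obtained from $K_{t-1}$ by deleting the $X$ smallest-magnitude surviving entries of $K_\mathrm{d}$. For $K_{t-1}$ and the child's controller to be comparable through Lemma~\ref{lemm:one_step}, both must lie in the sublevel set $\mathcal{S}$ on which Lemma~\ref{lemm:lipschitz} holds; $K_{t-1}$ is in $\mathcal{S}$ for a suitable choice of $c$. This membership of the child's controller is the one condition we cannot control combinatorially, so we keep it as the factor $\mathbb{P}[K_t\in\mathcal{S}]$. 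Conditioned on it, Proposition~\ref{prop:net_improve} gives $J_{\mathrm{EA}}(\theta^c)<J_{\mathrm{EA}}(\theta_{t-1}^*)$, which is a strict improvement. Multiplying the independent factors from Steps 1--3 yields \eqref{eq:p_imp}.

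\emph{Main obstacle.} The hardest step is justifying the independence and the exact form of the $\mathcal{S}$-membership factor. Membership depends on which entries are pruned, and hence on $\delta$. It should therefore really be read as a probability conditional on the parent and mutation event. Ideally one would lower-bound it by choosing $c$ large enough that every truncation with $h\ge h^*$ is stabilizing, using Lemma~\ref{lemm:decay} and a small-gain argument. The statement avoids this by leaving the factor explicit, so the plan is to state the bound conditionally and note that the factor equals $1$ whenever $\mathcal{S}$ contains all such truncations. A secondary subtlety is the crossover bookkeeping. The bound $1/N_p$ on selection is conservative, and it absorbs the crossover randomness because all genes agree on the mask coordinates.
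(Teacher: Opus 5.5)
Your proposal is correct and follows the paper's proof essentially step for step. It uses the same explicit improvement path: the elite $\theta_{t-1}^*$ is drawn as first parent with probability at least $1/N_p$, crossover with $k\ge 1$ preserves $\ell_{t-1}^*$, and mutation sets $\delta=-1$ with all $N_u+N_x$ mask bits untouched, while the applicability of Proposition~\ref{prop:net_improve} is carried as the factor $\mathbb{P}[K_t\in\mathcal{S}]$. You are more explicit than the paper in three places: you justify $1/N_p$ via the maximal Boltzmann weight, you note that mask agreement across parents relies on the standing assumption $\mathbf{a}=\mathbf{s}=\mathbf{1}$, and you observe that the $\mathcal{S}$-membership factor is really determined by the chosen path (so it should be read conditionally), whereas the paper just points to Theorem~\ref{thm:lyap-stab} for a sufficient condition.
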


\begin{proof}
We construct an explicit improvement
path: selection chooses
$\theta_{t-1}^*$ as first parent
($\geq 1/N_p$);
crossover preserves $\ell_{t-1}^*$
($\ell$ at position~1, split $k\geq 1$);
mutation sets $\delta=-1$
($1/(2d\!+\!1)$) with
$\mathbf{a},\mathbf{s}$ unchanged
($(1\!-\!p_m)^{N_u+N_x}$).
The offspring has $X_t=1$ and identical
masks.
Proposition~\ref{prop:net_improve}
applies provided $K_t\in\mathcal{S}$;
a sufficient condition is
$\sqrt{N_uN_x}\,\Upsilon
\rho^{h(\ell_{t-1}^*-1)}
<\sigma_{\mathrm{crit}}$,
which is quantified in
Section~\ref{sec:stability}
(Theorem~\ref{thm:lyap-stab}).
\end{proof}

Finally, we present the theorem on convergence:
\begin{theorem} \label{thm:conv}
Let $\Phi(\cdot)$ and $h^*$ be as defined in Lemma \ref{lemm:h_crit}. For generation $t \geq 1$, define per-offspring improvement probability
\begin{equation}\label{eq:p_imp_1}
  p_{\mathrm{imp}}(t)
  \;:=\;
  \frac{1}{N_p(2d+1)}
  \, \mathbf{1}\{h_{t-1}>h^*\},
\end{equation}
where $\mathbf{1}$ is the indicator function. Define the population-level improvement probability as
$P_{\mathrm{imp}}(t)
  :=1-\bigl(1-p_{\mathrm{imp}}(t)
  \bigr)^{N_p-n_e}$.
Then, the following per-generation improvement
lower bound holds: $\forall t$ with $h_{t-1}>h^*$,
\begin{equation}\label{eq:EA_step}
  \mathbb{E}\bigl[
    J_{\mathrm{EA}}(\theta_{t-1}^*)
    -J_{\mathrm{EA}}(\theta_t^*)
  \bigr]
  \;\geq\;
  \bigl(w_c-\Phi(h_{t-1})\bigr)
  \,P_{\mathrm{imp}}(t)
  \;>\;0.
\end{equation}
\emph{(ii) Guaranteed pruning depth.}
For all $t$ with $h_t > h^*$,
the EA continues to prune with positive
probability.  In particular,
the terminal link count satisfies
$\ell_\infty^* \leq h^{-1}(h^*)$.
\end{theorem}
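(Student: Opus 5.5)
The plan is to prove part (i) by combining the elitism property with Proposition~\ref{prop:p_imp} and Proposition~\ref{prop:net_improve}, and then to obtain part (ii) from monotonicity of the truncation distance in the link count.

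\emph{Step 1: the improvement gap is nonnegative.} Because the best individual is carried over, $\theta_{t-1}^*\in\mathcal{P}_t$. Hence the random variable $D_t := J_{\mathrm{EA}}(\theta_{t-1}^*)-J_{\mathrm{EA}}(\theta_t^*)$ satisfies $D_t\ge 0$ almost surely. This lets me bound the expectation from below by restricting to a single good event,
\[
\mathbb{E}[D_t]\;\ge\;\mathbb{E}[D_t\,\mathbf{1}_{E_t}],
\]
where $E_t$ is the event that at least one of the $N_p-n_e$ offspring is produced by the explicit path in the proof of Proposition~\ref{prop:p_imp}. That path is: select $\theta^*_{t-1}$ as a parent, let crossover preserve $\ell$, draw $\delta=-1$, and leave the masks unmutated.

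\emph{Step 2: the probability of $E_t$.} Each offspring is generated independently given $\mathcal{P}_{t-1}$, and each follows the path with probability at least $p_{\mathrm{imp}}(t)$. Therefore
\[
\mathbb{P}[E_t]\;\ge\;1-(1-p_{\mathrm{imp}}(t))^{N_p-n_e}\;=\;P_{\mathrm{imp}}(t).
\]
Here the simplified $p_{\mathrm{imp}}$ in \eqref{eq:p_imp_1} omits the factor $(1-p_m)^{N_u+N_x}\,\mathbb{P}[K_t\in\mathcal{S}]$. I will either absorb this factor by assuming $p_m$ is small and $K_t\in\mathcal{S}$ (stability guaranteed via Theorem~\ref{thm:lyap-stab}), or state explicitly that it is a nominal bound.

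\emph{Step 3: the size of $D_t$ on $E_t$.} The path offspring has $X=1$ and $h_{t-1}>h^*$, so Proposition~\ref{prop:net_improve} gives a cost decrease of at least $w_c-\Phi(h_{t-1})$. Since $\theta_t^*$ is the argmin over $\mathcal{P}_t$, which contains this offspring, $D_t\ge w_c-\Phi(h_{t-1})$ on $E_t$. Multiplying the two bounds yields \eqref{eq:EA_step}. Positivity follows from Lemma~\ref{lemm:h_crit}, which gives $\Phi(h_{t-1})<w_c$ because $\Phi$ is decreasing in $h$ and $h^*$ solves $\Phi(h)=w_c$.

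\emph{Step 4: part (ii).} As long as $h_t>h^*$, Step 3 shows that with probability at least $P_{\mathrm{imp}}>0$ per generation a strictly better individual with one fewer link appears. The magnitude-ordered truncation $\Pi_\ell$ is nested in $\ell$, so $h$ is a nonincreasing function of $\ell$; I write $h^{-1}$ for its generalized inverse. By Borel--Cantelli (geometric waiting times), with probability one the link count keeps decreasing until $h_t\le h^*$. It cannot increase the cost once it drops, because $J_{\mathrm{EA}}$ is nonincreasing along the elite trajectory. Hence $\ell_\infty^*\le h^{-1}(h^*)$.

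\emph{Main obstacle.} The delicate step is justifying that the elite's decrease is attributable to pruning. The new best $\theta_t^*$ may differ from the path offspring and change $\ell$ in other ways. In (i) this is harmless, since only the argmin property is used. In (ii), however, I must argue that a configuration with $h>h^*$ can never be the absorbing minimizer, and I must handle the stochastic factor $\mathbb{P}[K_t\in\mathcal{S}]$. I would first try conditioning on stability of all truncations with $h>h^*$, using the Section~\ref{sec:stability} criterion.
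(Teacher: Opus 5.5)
Your proposal follows essentially the same route as the paper's proof: elitism, the explicit selection/crossover/mutation path of Proposition~\ref{prop:p_imp}, the net-decrease bound of Proposition~\ref{prop:net_improve}, independence of the $N_p-n_e$ offspring, and a bounded-monotone-cost argument for (ii), where your $\mathbb{E}[D_t]\ge\mathbb{E}[D_t\mathbf{1}_{E_t}]$ plus the argmin step is a cleaner version of the paper's ``taking expectations.'' Three small repairs are needed: the factor $(1-p_m)^{N_u+N_x}\,\mathbb{P}[K_t\in\mathcal{S}]$ equals $1$ only via the section's standing assumption $\mathbf{a}=\mathbf{s}=\mathbf{1}$ (masks frozen) plus an assumed $K_t\in\mathcal{S}$, as the paper tacitly uses, not via ``small $p_m$''; $h$ is nondecreasing (not nonincreasing) in $\ell$, which is the direction that turns $h_t\le h^*$ into $\ell_t^*\le h^{-1}(h^*)$; and your ``main obstacle'' disappears if, like the paper, you track the cost rather than $\ell_t^*$, because each generation with $h_{t-1}>h^*$ lowers the nonincreasing, nonnegative elite cost by at least $w_c-\Phi(h^*+1)>0$ with probability at least the fixed $P_{\mathrm{imp}}>0$, so almost surely only finitely many such generations occur.
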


\begin{proof}
First, combine Lemmas ~\ref{lemm:lipschitz} and ~\ref{lemm:decay} to bound the per-step LQR cost increase.
When $h_{t-1}>h^*$, the EA cost increase due to LQR cost increase LQR is outweighed by the EA cost decrease due to the reduction in communication links ($w_c X_t$), yielding a net decrease in EA cost. A constructive lower bound on the improvement probability via selection and mutation gives $P_{\mathrm{imp}}(t)$. Convergence follows from the monotone bounded sequence theorem. Then, apply results from Lemma \ref{lemm:one_step} and Propositions \ref{prop:net_improve} and \ref{prop:p_imp}. Taking expectations over $N_p-n_e$
independent offspring, we obtain 
\begin{equation} \label{eq:plotted_bound}
  \mathbb{E}\bigl[
    J_{\mathrm{EA}}(\theta_{t-1}^*)
    -J_{\mathrm{EA}}(\theta_t^*)\bigr]
  \geq
  \bigl(w_c-\Phi(h_{t-1})\bigr)
  P_{\mathrm{imp}}(t).    
\end{equation}
which gives the desired result.
Since $J_{\mathrm{EA}}\geq 0$ and the
right-hand side is strictly positive
while $h_t>h^*$, the sequence
$\{J_{\mathrm{EA}}(\theta_t^*)\}$ is
hence convergent and giving $\lim_{t\to\infty}h_t \le h^*$.
    
\end{proof}

Summing over the first $T$ active
generations (i.e., those with $h_{t-1}>h^*$):
\begin{equation}\label{eq:EA_sum}
  \mathbb{E}\bigl[
    J_{\mathrm{EA}}(\theta_0^*)
    -J_{\mathrm{EA}}(\theta_T^*)
  \bigr]
  \;\geq\;
  \sum_{t=1}^{T}
  \bigl(w_c-\Phi(h_{t-1})\bigr)
  \,P_{\mathrm{imp}}(t).
\end{equation}

Overall, our results in this section tell us that the EA cost will probabilistically improve until it reaches some lower limit (related to $h^*$), at which point it will stagnate (i.e., converge). The presented improvement probabilities (specifically, \eqref{eq:plotted_bound}) can be used to approximate cost as the EA goes through successive generations; we show in simulations (Section \ref{sec:simulations}) that these quite closely match the true EA cost improvements, particularly in later generations.

\section{Stability analysis for EA co-design} \label{sec:stability}
In this section, we provide analysis for the closed-loop stability associated with controllers encoded by the EA population, i.e., $K_\mathrm{s}(\theta^i)$ for $i \in \mathcal{P}_t$.
The argument proceeds as follows: first, we construct a Lyapunov function for optimal LQR controller $K_\mathrm{d}$ and quantify its stability margin.
We then relate this to the difference between the dense controller and the sparsified controller, i.e., $\Delta K := K_{\mathrm{d}} - K_{\mathrm{s}}$, and analyze the effects of communication link vs. actuator/sensor sparsification.

The closed-loop system \eqref{eq:plant} with dense LQR controller $K_{\mathrm{d}}$ is $A+BK_{\mathrm{d}}$. By Lemma \ref{lemm:decay} and \cite[Theorem~A.7]{shin}, this closed-loop is $(\Upsilon,\rho)$-stable. A Lyapunov function $V(x) = x^\top V^* x$ for this system can be found by solving the discrete Lyapunov equation for matrix $V^*$. Additionally, the $(\Upsilon,\rho)$-stability of this system yields
\begin{equation}\label{eq:lyap-sandwich}
  \|x\|^{2} \;\le\; V(x) \;\le\; M\,\|x\|^{2},
  \qquad M := \frac{\Upsilon^{2}}{1 - \rho^{2}},
\end{equation}

Rearranging \eqref{eq:lyap-sandwich} gives the unit-decrease
identity
\begin{equation}\label{eq:lyap-decrease}
  V(x) - V(A+BK_{\mathrm{d}} x) = \|x\|^{2} \quad \forall x \in \mathbb{R}^{N_x}.
\end{equation}

%Then, the closed-loop with sparse matrix $K_{\mathrm{s}}$ can be written as 
%$A + B K_{\mathrm{s}}
%  = A+BK_{\mathrm{d}} - B \Delta K$.

\begin{theorem}
\label{thm:lyap-stab} Define the stability margin
\begin{equation}\label{eq:sigma-crit}
  \sigma_{\mathrm{crit}}
  := \frac{1 - \rho^{2}}{4\,\Upsilon^{2}\,L
       \bigl(L + 2\,\Upsilon\rho\bigr)},
\end{equation}
where $L$ is from Assumption~\ref{as:system_params}(a) and
$\Upsilon,\rho$ are from Lemma~\ref{lemm:decay}.
If $\sigma := \|K_{\mathrm{d}} - K_{\mathrm{s}}\|_F \le \sigma_{\mathrm{crit}}$, then the closed-loop system 
$A+BK_{\mathrm{s}}$ is $(\Omega,\beta)$-stable, where
\begin{equation}\label{eq:omega-beta}
  \beta :=
    \Bigl(1 - \frac{1-\rho^{2}}{2\,\Upsilon^{2}}\Bigr)^{\!1/2}
    \!\in (0,1),
  \qquad
  \Omega :=
    \Bigl(\frac{\Upsilon^{2}}{1-\rho^{2}}\Bigr)^{\!1/2}.
\end{equation}
\end{theorem}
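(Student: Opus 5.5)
We treat $A+BK_{\mathrm{s}}$ as a perturbation of the dense closed loop and use the Lyapunov function $V(x)=x^\top V^* x$ built above. Write $A_{\mathrm{d}}:=A+BK_{\mathrm{d}}$ and $\Delta K := K_{\mathrm{d}}-K_{\mathrm{s}}$, so that $A+BK_{\mathrm{s}} = A_{\mathrm{d}} - B\Delta K$. Take $V^* = \sum_{k\ge 0}(A_{\mathrm{d}}^k)^\top A_{\mathrm{d}}^k$, the solution of $A_{\mathrm{d}}^\top V^* A_{\mathrm{d}} - V^* = -I$. Then \eqref{eq:lyap-sandwich} gives $I\preceq V^*\preceq M I$, using $(\Upsilon,\rho)$-stability. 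Moreover \eqref{eq:lyap-decrease} holds exactly.

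\textbf{Step 1 (expansion along the perturbed closed loop).} Put $y = A_{\mathrm{d}}x$ and $e = B\Delta K x$. Then
\begin{equation*}
V\bigl((A_{\mathrm{d}}-B\Delta K)x\bigr) = V(y) - 2y^\top V^* e + e^\top V^* e .
\end{equation*}
Using \eqref{eq:lyap-decrease}, this becomes $V(x) - \|x\|^2 - 2y^\top V^* e + e^\top V^* e$.

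\textbf{Step 2 (bounding the cross and quadratic terms).} The quantities available are $\|V^*\|\le M$, $\|B\|\le L$ from Assumption~\ref{as:system_params}(a), and $\|\Delta K\|\le\|\Delta K\|_F=\sigma$. For $\|A_{\mathrm{d}}\|$ we have two options: the crude bound $\|A\|+\|B\|\|K_{\mathrm{d}}\|$, or the bound $\|A_{\mathrm{d}}\|\le \Upsilon\rho$ from $(\Upsilon,\rho)$-stability at $k=1$. The target constant $L(L+2\Upsilon\rho)$ suggests grouping the perturbation terms as
\begin{equation*}
|2y^\top V^* e| + |e^\top V^* e| \le M\sigma L\,(2\Upsilon\rho + L\sigma)\|x\|^2 .
\end{equation*}
If $\sigma\le\sigma_{\mathrm{crit}}\le 1$, this is at most $M\sigma L(L+2\Upsilon\rho)\|x\|^2$. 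We will need to check that $\sigma_{\mathrm{crit}}\le 1$; it holds since $\Upsilon\ge1$ and $L>1$. Plugging in $\sigma_{\mathrm{crit}}$ with $M = \Upsilon^2/(1-\rho^2)$ bounds the perturbation by $\tfrac14\|x\|^2$, hence certainly by $\tfrac12\|x\|^2$. This gives $V((A+BK_{\mathrm{s}})x)\le V(x)-\tfrac12\|x\|^2$.

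\textbf{Step 3 (converting decrease to $(\Omega,\beta)$-stability).} Since $\|x\|^2\ge V(x)/M$, Step 2 gives
\begin{equation*}
V(x_{k+1})\le \Bigl(1-\tfrac{1}{2M}\Bigr)V(x_k) = \beta^2 V(x_k),
\end{equation*}
which matches \eqref{eq:omega-beta}. Iterating and using the sandwich yields $\|x_k\|^2\le V(x_k)\le \beta^{2k}V(x_0)\le \beta^{2k}M\|x_0\|^2$. Hence $\|(A+BK_{\mathrm{s}})^k\|\le \sqrt{M}\,\beta^k = \Omega\beta^k$. Finally, $\beta\in(0,1)$ because $M\ge 1/(1-\rho^2)>1$, which gives $1/(2M)\in(0,\tfrac12)$.

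\textbf{Main obstacle.} The delicate step is Step 2: the constants must be justified rigorously rather than just matched to the stated $\sigma_{\mathrm{crit}}$. In particular, we need $\|A_{\mathrm{d}}\|\le\Upsilon\rho$ and the upper bound $V^*\preceq MI$, both taken from the $(\Upsilon,\rho)$-stability claim cited from \cite{shin}. If the quadratic term does not fit, the first fallback is to exploit the slack between $\tfrac14$ and $\tfrac12$ in the decrease, which can absorb a looser bound such as $\sigma^2\le\sigma$.
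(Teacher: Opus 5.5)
Your proposal is correct and follows essentially the paper's own argument. You use the same Lyapunov function for $A+BK_{\mathrm{d}}$, the same split of $V\bigl((A+BK_{\mathrm{d}}-B\Delta K)x\bigr)-V(x)$ into $-\|x\|^2$ plus a cross term and a quadratic term (bounded via $\|A+BK_{\mathrm{d}}\|\le\Upsilon\rho$, $\|V^*\|\le M$, $\|B\|\le L$), and the same conversion of $V(x_{k+1})\le V(x_k)-\tfrac12\|x_k\|^2$ into $\beta^2=1-1/(2M)$ and $\Omega=\sqrt{M}$. The differences are cosmetic: you absorb the quadratic term using $\sigma_{\mathrm{crit}}\le 1$ to get a single bound of $\tfrac14$, where the paper bounds the two terms separately by $\tfrac1{16}$ and $\tfrac14$; and you derive the sandwich and unit-decrease identity directly from $V^*=\sum_{k\ge0}(A_{\mathrm{d}}^k)^\top A_{\mathrm{d}}^k$, which is cleaner than the paper's wording.
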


\begin{proof}

Write $A+BK_{\mathrm{s}}
  = A+BK_{\mathrm{d}} - B\Delta K$.
Substituting into
$V(x) = x^\top V^* x$ and applying
the Lyapunov decrease
identity~\eqref{eq:lyap-decrease} yields
\begin{align}
  &V(A+BK_{\mathrm{s}}\,x) - V(x)
  \notag\\
  &= -\|x\|^{2}
     + \underbrace{(B\Delta K\,x)^{\!\top}
       V^{\star}(B\Delta K\,x)}_{T_{1}}
  \notag\\
  &\quad
     - \underbrace{2(A+BK_{\mathrm{d}}\,x)^{\!\top}
       V^{\star} B\Delta K\,x}_{T_{2}}.
  \label{eq:lyap-diff}
\end{align}
We now bound terms $T_1$ and $T_2$. For $T_1$, by~\eqref{eq:lyap-sandwich} and Assumption~1(a)
($\|B\| \le L$),
\[
  T_{1} \le M\,\|B\|^{2}\,\sigma^{2}\,\|x\|^{2}
       \le M\, L^{2}\,\sigma^{2}\,\|x\|^{2}.
\]
For $T_2$, since $\|A+BK_{\mathrm{d}}\| \le \Upsilon\rho$
(from $(\Upsilon,\rho)$-stability at time $k=1$) and
$\|V^{\star}\| \le M$,
\[
  |T_{2}| \le 2\,M\,\Upsilon\rho\,L\,\sigma\,\|x\|^{2}.
\]

Combining with~\eqref{eq:lyap-diff}:
\begin{equation}\label{eq:lyap-diff-bound}
  V(A+BK_{\mathrm{s}} x) - V(x)
  \le -\Bigl(1 - M\,L\,\sigma\,
       \bigl(L\sigma + 2\,\Upsilon\rho\bigr)\Bigr)\,\|x\|^{2}.
\end{equation}
When $\sigma \le \sigma_{\mathrm{crit}}$, we further bound these terms. First, note that
\[
  M L \sigma_{\mathrm{crit}}
  = \frac{\Upsilon^{2}}{1-\rho^{2}} \cdot L \cdot
    \frac{1-\rho^{2}}{4\Upsilon^{2} L(L+2\Upsilon\rho)}
  = \frac{1}{4(L+2\Upsilon\rho)}.
\]
Then, the quadratic term in the bound of $T_1$ is bounded by
$M L^{2} \sigma_{\mathrm{crit}}^{2}
  = M L \sigma_{\mathrm{crit}} \cdot L\sigma_{\mathrm{crit}}
  = \frac{1-\rho^{2}}
         {16\,\Upsilon^{2}(L + 2\Upsilon\rho)^{2}}
  \le \frac{1}{16}$,
since $\Upsilon \ge 1$, $L \ge 1$, and $1-\rho^{2} \le 1$.
The cross-term in the bound of $T_2$ is bounded by  $2 M \Upsilon\rho\, L\, \sigma_{\mathrm{crit}}
  = \frac{\Upsilon\rho}{2(L + 2\Upsilon\rho)}
  \le \frac{1}{4}$,
where the inequality uses $L + 2\Upsilon\rho \ge 2\Upsilon\rho$.
%(with the convention $0/0 = 0$ when $\rho = 0$).
Therefore,
\[
  M L \sigma(L\sigma + 2\Upsilon\rho)
  \le M L^{2} \sigma_{\mathrm{crit}}^{2}
      + 2 M \Upsilon\rho\, L\, \sigma_{\mathrm{crit}} 
  = \tfrac{5}{16}
  < \tfrac{1}{2},
\]

and~\eqref{eq:lyap-diff-bound} becomes
\begin{equation}\label{eq:lyap-half}
  V(A+BK_{\mathrm{s}} x)
  \le V(x) - \bigl(1 - \tfrac{5}{16}\bigr)\,\|x\|^{2}
  \le V(x) - \tfrac{1}{2}\,\|x\|^{2}.
\end{equation}

Dividing by $V(x)$ ($x \ne 0$) and using
$\|x\|^{2}/V(x) \ge 1/M$:
\[
  \frac{V(A+BK_{\mathrm{s}} x)}{V(x)}
  \le 1 - \frac{1}{2M} = 1 - \frac{1-\rho^{2}}{2\,\Upsilon^{2}}
  = \beta^{2}.
\]
Iterating: $V(x_k) \le \beta^{2k}\,V(x_0)$.
Translating back via~\eqref{eq:lyap-sandwich}:
\[
  \|x_k\|^{2}
  \le V(x_k)
  \le \beta^{2k}\,M\,\|x_0\|^{2},
\]
so $\|x_k\| \le \Omega\,\beta^{k}\,\|x_0\|$
with $\Omega = M^{1/2}$.
Since $\Upsilon \ge 1$ and $\rho \in (0,1)$,
we have $\beta^{2} = 1 - (1-\rho^2)/(2\Upsilon^2) \in (0,1)$.
\end{proof}

\begin{remark}[Interpretation of $\sigma_{\mathrm{crit}}$]
\label{rem:sigma-crit}
The stability margin $\sigma_{\mathrm{crit}}$ depends
on $(\Upsilon,\rho)$ from Lemma~1
through ~\eqref{eq:sigma-crit}.
As $\rho\to 1$ (slow spatial decay),
the numerator $1-\rho^2\to 0$ while the
denominator grows, so
$\sigma_{\mathrm{crit}}\to 0$, i.e.,
the allowable controller perturbation vanishes.
\end{remark}

\begin{remark}
\label{rem:perf-connection}
This Lyapunov function $V(x)$ also enables performance analysis. Since $V$ decreases geometrically along the closed-loop trajectory with
rate $\beta^{2}$, standard perturbation arguments yield a
sub-optimality gap of the form
\[
  J_{\mathrm{LQR}}(K_{\mathrm{s}})
  - J_{\mathrm{LQR}}(K_{\mathrm{d}})
  \;=\; O\!\left(\frac{\sigma \|\Sigma\|^{2}}{1-\beta^{2}}\right),
\]
%which is exponentially small in the link budget $\ell$ (through $\sigma$'s dependence on truncation), paralleling the near-optimality result for $\kappa$-distributed control in \cite[Theorem~4.2]{shin}. In Section\Red{TODO} the Gershgorin repair operator provides an alternative stability certificate for individuals that fail the Lyapunov condition; the two mechanisms are complementary.
\end{remark}

Sparse controller $K_\mathrm{s}(\theta)$
is constructed from $K_{\mathrm{d}}$ via pruning, as detailed in previous sections. we decompose the gain perturbation from effects related to communication link pruning and actuator/sensor selection:
\begin{equation}\label{eq:deltaK-decomp}
  \Delta K
  = \underbrace{K_{\mathrm{d}}
    - \Pi_{\ell}(K_{\mathrm{d}})}
    _{\Delta K_{\mathrm{comm}}}
  + \underbrace{\Pi_\ell(K_{\mathrm{d}})
    - \Pi_{\mathbf{a},\mathbf{s}}\!\bigl(
      \Pi_\ell(K_{\mathrm{d}})\bigr)}
    _{\Delta K_{\mathrm{as}}},
\end{equation}

\begin{theorem}
\label{thm:stab-trunc}
Consider pruned controller $K_{\mathrm{s}}(\theta)$ with effective truncation distance $h$.
Define
\begin{equation}\label{eq:h_stab}
  h_{\mathrm{stab}} := \min\!\Bigl\{h\geq 0 :
    \Upsilon\!\!
    \sqrt{\sum_{r>h} N_{\Delta}(r)\,
      \rho^{2r}}
    < \sigma_{\mathrm{crit}}\Bigr\},
\end{equation}
where $\sigma_{\mathrm{crit}}$ is
given by~\eqref{eq:sigma-crit}
and $N_{\Delta}(r)
  :=|\{(i,j):d_{\mathcal{G}}(i,j)=r\}|$. Then, 

\emph{(i)} If $\mathbf{a}=\mathbf{1}$,
$\mathbf{s}=\mathbf{1}$ and $h\geq h_{\mathrm{stab}}$,
then $A+BK_{\mathrm{s}}$ is
$(\Omega,\beta)$-stable with
$\Omega,\beta$ as
in~\eqref{eq:omega-beta}.

\emph{(ii)} For general values of $\mathbf{a}, \mathbf{s}$,
$A+BK_{\mathrm{s}}$ is
$(\Omega,\beta)$-stable whenever
\begin{equation}\label{eq:stab-general}
  \|\Delta K_{\mathrm{comm}}\|_F
  + \|\Delta K_{\mathrm{as}}\|_F
  < \sigma_{\mathrm{crit}}.
\end{equation}

\end{theorem}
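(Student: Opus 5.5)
Both parts reduce to Theorem~\ref{thm:lyap-stab}: once we show $\sigma = \|K_{\mathrm{d}} - K_{\mathrm{s}}\|_F \le \sigma_{\mathrm{crit}}$, that theorem gives $(\Omega,\beta)$-stability of $A+BK_{\mathrm{s}}$ with the constants in~\eqref{eq:omega-beta}. So the task is to bound $\|\Delta K\|_F$ using the decomposition~\eqref{eq:deltaK-decomp}.

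\emph{Part (ii).} This is the easier part and I would do it first. Apply the triangle inequality to~\eqref{eq:deltaK-decomp}:
\[
\|\Delta K\|_F \le \|\Delta K_{\mathrm{comm}}\|_F + \|\Delta K_{\mathrm{as}}\|_F .
\]
Under hypothesis~\eqref{eq:stab-general} the right-hand side is strictly less than $\sigma_{\mathrm{crit}}$. Theorem~\ref{thm:lyap-stab} then applies directly.

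\emph{Part (i).} With $\mathbf{a}=\mathbf{1}$ and $\mathbf{s}=\mathbf{1}$, the operator $\Pi_{\mathbf{a},\mathbf{s}}$ is the identity, so $\Delta K_{\mathrm{as}}=0$ and $\Delta K = K_{\mathrm{d}} - \Pi_\ell(K_{\mathrm{d}})$. This matrix is supported on the entries (blocks) of $K_{\mathrm{d}}$ that were pruned. By Definition~\ref{def:ea_trajectory}(c), every pair $(i,j)$ with $d_{\mathcal{G}}(i,j)\le h$ is retained, so every pruned pair has $d_{\mathcal{G}}(i,j) = r > h$. Lemma~\ref{lemm:decay} gives $\|K_{\mathrm{d},ij}\|\le \Upsilon\rho^{r}$ for such blocks. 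Grouping the pruned pairs by distance $r$ and counting at most $N_\Delta(r)$ of them at each distance gives
\[
\|\Delta K\|_F^2 \le \sum_{r>h} N_\Delta(r)\,\Upsilon^2\rho^{2r}.
\]
The tail $\sum_{r>h} N_\Delta(r)\rho^{2r}$ is nonincreasing in $h$. Hence $h\ge h_{\mathrm{stab}}$ implies the bound at $h$ is at most the bound at $h_{\mathrm{stab}}$, which is below $\sigma_{\mathrm{crit}}$ by~\eqref{eq:h_stab}. Theorem~\ref{thm:lyap-stab} again finishes.

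\emph{Expected obstacle.} The delicate step is the Frobenius-norm bookkeeping in part~(i). Lemma~\ref{lemm:decay} bounds the induced norm of each subsystem block, and a block's Frobenius norm can exceed its induced norm. There are two ways to handle this. One is to interpret $N_\Delta(r)$ as counting scalar entries (state/input index pairs), with each entry bounded by its block norm. The other is to accept a dimension factor from $\|\cdot\|_F\le\sqrt{\mathrm{rank}}\,\|\cdot\|$, which for scalar subsystems is $1$. I would state the entrywise interpretation explicitly, consistent with how~\eqref{eq:diff_bound} is used in Lemma~\ref{lemm:one_step}.

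A second subtlety is that $\Pi_\ell$ may prune only part of a block at distance $r$. This only decreases that block's contribution, so the bound still holds.
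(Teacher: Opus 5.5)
Your proposal is correct and follows the paper's proof essentially step for step. Both bound the pruned tail using Lemma~\ref{lemm:decay} grouped by graph distance, apply the triangle inequality to~\eqref{eq:deltaK-decomp} (with $\Delta K_{\mathrm{as}}=0$ when $\mathbf{a}=\mathbf{1}$, $\mathbf{s}=\mathbf{1}$), and then invoke Theorem~\ref{thm:lyap-stab}. Your explicit monotonicity-in-$h$ step and your block-norm versus Frobenius-norm remark are more careful than the paper. One caveat applies to both your argument and the paper's: each takes for granted that no entry of a block at distance $\le h$ is pruned. Definition~\ref{def:ea_trajectory}(c) is stated at the block/edge level, so it does not strictly guarantee this.
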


\begin{proof}
First, we bound $\|\Delta K_{\mathrm{comm}}\|_F$. Let $h$ be the effective truncation distance
of $P_{I_\ell}(K_{\mathrm{d}})$
as in Definition~\ref{def:ea_trajectory}(c).
By Lemma~\ref{lemm:decay}, every discarded
entry satisfies
$|K^{\star}_{ij}|
  \leq \Upsilon\rho^{d_{\mathcal{G}}(i,j)}$
with $d_{\mathcal{G}}(i,j)>h$, so
\begin{equation}\label{eq:trunc-bound}
  \|\Delta K_{\mathrm{comm}}\|_F^2
  \;\leq\;
  \Upsilon^2
  \!\sum_{r=h+1}^{\infty}
    N_{\Delta}(r)\,\rho^{2r},
\end{equation}
where
$N_{\Delta}(r)
  :=|\{(i,j):d_{\mathcal{G}}(i,j)=r\}|$.
For bounded-degree graphs, this sum is
dominated by $\rho^{2(h+1)}$, giving
$\|\Delta K_{\mathrm{comm}}\|_F
  \sim O(\Upsilon\,\rho^{h})$.
By~\eqref{eq:deltaK-decomp} and the
triangle inequality,
$\|K_{\mathrm{d}}-K_{\mathrm{s}}\|_F
  \leq \|\Delta K_{\mathrm{comm}}\|_F
  + \|\Delta K_{\mathrm{as}}\|_F$.
When $\mathbf{a}=\mathbf{1}$,
$\mathbf{s}=\mathbf{1}$,
$\Delta K_{\mathrm{as}}=0$ and
Lemma~\ref{lemm:decay}
with~\eqref{eq:trunc-bound} gives
$\|\Delta K_{\mathrm{comm}}\|_F
  < \sigma_{\mathrm{crit}}$
for $h\geq h_{\mathrm{stab}}$.
Applying Theorem~\ref{thm:lyap-stab}
yields~(i).
Part~(ii) follows identically from
Theorem~\ref{thm:lyap-stab}.
%The monotonicity holds because
%$\Pi_{a,s}$ can only zero out additional
%entries of $P_{I_\ell}(K_{\mathrm{d}})$.
\end{proof}
\begin{corollary}[Offspring stability probability]
\label{cor:ea-stab}
At generation $t$, suppose the best
individual has link count $\ell_t^*$
with $h(\ell_t^*)\geq h_{\mathrm{stab}}$.
Then any single offspring is
$(\Omega,\beta)$-stable with
probability at least
\begin{equation}\label{eq:p_stab}
  p_{\mathrm{stab}}(t)
  \;\geq\;
  \frac{[\ell_t^*
    -\ell_{\mathrm{stab}}
    +d+1]_+}{2d+1}\;
  (1-p_m)^{N_u+N_x},
\end{equation}
where $\ell_{\mathrm{stab}}
  := \min\{\ell:h(\ell)
  \geq h_{\mathrm{stab}}\}$,
$[\cdot]_+:=\max(\cdot,0)$,
and $d$ is the mutation range.
In particular, when
$\ell_t^*\geq\ell_{\mathrm{stab}}+d$,
the first factor equals $1$ and
stability is limited only by the
mask-preservation probability
$(1-p_m)^{N_u+N_x}$.
\end{corollary}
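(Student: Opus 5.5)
The plan is to lower-bound the probability of a single event that forces stability, in the same constructive style as Proposition~\ref{prop:p_imp}. The event is: the offspring keeps the all-ones masks $\mathbf{a}=\mathbf{1}$, $\mathbf{s}=\mathbf{1}$, and its link count $\ell$ lands in $\{\ell : \ell \geq \ell_{\mathrm{stab}}\}$. On this event, Theorem~\ref{thm:stab-trunc}(i) gives $(\Omega,\beta)$-stability directly. For this we need $h(\ell)\geq h_{\mathrm{stab}}$, and we take $h(\cdot)$ to be nondecreasing in $\ell$.

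That monotonicity is justified as follows. $\Pi_\ell$ keeps the $\ell$ largest-magnitude entries, so the retained support grows with $\ell$. The effective truncation distance of Definition~\ref{def:ea_trajectory}(c) is then nondecreasing, and $\ell\geq\ell_{\mathrm{stab}}$ implies $h(\ell)\geq h_{\mathrm{stab}}$. In keeping with Section~\ref{sec:convergence}, the masks of the best individual are treated as the all-ones masks, so "masks preserved" means $\Delta K_{\mathrm{as}}=0$.

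The steps, in order, are as follows.

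First, for the parental gene we use the same path as in Proposition~\ref{prop:p_imp}. The offspring inherits $\ell_t^*$ in position 1, since crossover with split $k\geq 1$ copies the first parent's $\ell$, and it inherits the best individual's masks. I would follow the paper's convention that this contributes no further loss beyond the stated factors. Note that the stated bound has no $1/N_p$ factor, so one must either argue that the relevant parent's $\ell$ and masks are those of $\theta_t^*$ or absorb this choice as in the paper's heuristic.

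Second, mutation preserves both masks with probability $(1-p_m)^{N_u+N_x}$, by independence of the Bernoulli flips.

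Third, $\delta$ is uniform on the $2d+1$ integers $\{-d,\dots,d\}$, drawn independently of the mask flips. The mutated link count is $\mathrm{sat}(\ell_t^*+\delta)$. It is at least $\ell_{\mathrm{stab}}$ whenever $\delta\geq \ell_{\mathrm{stab}}-\ell_t^*$, because saturation at the upper end only increases nothing harmful and $\ell_{\mathrm{stab}}\geq 1$. Counting the admissible $\delta\in[\max(-d,\ell_{\mathrm{stab}}-\ell_t^*),d]$ gives $[\ell_t^*-\ell_{\mathrm{stab}}+d+1]_+$ values, capped at $2d+1$. Multiplying the two independent factors yields \eqref{eq:p_stab}.

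Fourth, for the special case: when $\ell_t^*\geq\ell_{\mathrm{stab}}+d$, every $\delta$ is admissible. The count is then $2d+1$, and the first factor equals $1$. Strictly, the bracket should be read as $\min\{2d+1,\cdot\}$, which I would note.

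The main obstacle is the first step: justifying that the offspring's $\ell$ and masks really equal those of $\theta_t^*$ without paying a selection factor, and that the best individual's masks are the all-ones masks required by part (i). If this fails, I would fall back on Theorem~\ref{thm:stab-trunc}(ii). That route requires showing that mask preservation keeps $\|\Delta K_{\mathrm{comm}}\|_F+\|\Delta K_{\mathrm{as}}\|_F$ at the best individual's value. It also requires strengthening the hypothesis so that this sum, evaluated at any $\ell\geq\ell_{\mathrm{stab}}$, stays below $\sigma_{\mathrm{crit}}$. This is plausible because $\|\Delta K_{\mathrm{comm}}\|_F$ is monotone nonincreasing in $\ell$.
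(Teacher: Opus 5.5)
Your plan matches the paper's proof. The paper simply takes the offspring's link count to be $\ell_t^*+\delta$ with $\delta\sim\mathrm{Unif}\{-d,\dots,d\}$, which is exactly the convention of your first step: no $1/N_p$ selection factor, and masks implicitly all ones. It then invokes Theorem~\ref{thm:stab-trunc}(i) for $\ell\ge\ell_{\mathrm{stab}}$ and multiplies the fraction of admissible $\delta$ by the independent mask-preservation probability $(1-p_m)^{N_u+N_x}$.

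Your extra points are correct refinements the paper leaves implicit: monotonicity of $h(\ell)$ via nested supports, the harmlessness of saturation, and reading the first factor as $\min\{1,\cdot\}$ when $\ell_t^*>\ell_{\mathrm{stab}}+d$. The paper's proof does not address the selection-factor issue you flag either.
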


\begin{proof}
Offspring link count is
$\ell_t^*+\delta$ with
$\delta\sim\mathrm{Unif}\{-d,\ldots,d\}$.
Stability requires
$\ell\geq\ell_{\mathrm{stab}}$
(Theorem~\ref{thm:stab-trunc}(i))
and unflipped masks (probability
$(1-p_m)^{N_u+N_x}$).
The two events are independent;
multiplying gives~\eqref{eq:p_stab}.
\end{proof}
In general, when the open-loop plant $A$ is stable, this bound is satisfied by the majority of individuals in any given EA generation. However, when the open-loop plant $A$ is unstable, this bound is violated and a substantial amount of individuals in each generation become unstable; this motivates the following section.

\section{EA modifications for unstable open-loop plants}\label{sec:gersh_repair}
In general, the presence of some unstable individuals is not detrimental to the EA. However, when open-loop plant $A$ is unstable, a large portion of individuals become unstable; this renders the EA more ineffective at finding optimal solutions to \eqref{eq:codesign_ours}, since most of its population genes encode solutions with infinite cost. Here, we introduce a modification to our EA algorithm to overcome this. The general idea is as follows: for each ``unstable" individual $i$ (i.e., $A+BK_{\mathrm{s}}(\theta^i)$ is unstable) in the population, we develop an alternative cost evaluation mechanism. Instead of directly using controller $K_{\mathrm{s}}(\theta^i)$, we introduce a repaired controller $K^r_{\mathrm{s}}(\theta^i)$, which has the same sparsity as $K_{\mathrm{s}}(\theta^i)$ but with different numerical values, such that $A+BK^r_{\mathrm{s}}(\theta^i)$ is stable. In this way, we are able to efficiently utilize controllers encoded by genes $\theta^i$ to continue minimizing EA cost \eqref{eq:codesign_ours}. The final EA output will then also be modified using this repaired controller. 

Controller repair will leverage the Gershgorin disk theorem. For a matrix $M\in\mathbb{R}^{n\times n}$, define the
\emph{Gershgorin row-sum}
\begin{equation}\label{eq:Ri}
  R_i(M) :=  \sum_{j=1}^{n} |M_{ij}|, \qquad i=1,\dots,n,
\end{equation}
and the \textit{Gershgorin radius}
$\bar{R}(M) := \max_{i} R_i(M)$.

\begin{lemma}[Gershgorin sufficient condition]\label{lem:gers}
If\, $\bar{R}(A+BK)<1$, then $A+BK$ is Schur stable.
\end{lemma}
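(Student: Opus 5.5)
The plan is to bound the spectral radius of $M := A+BK$ by its maximum absolute row sum, i.e., the induced $\infty$-norm, and then apply the Schur criterion $\rho(M)<1$. Two routes reach the same place.

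The first route uses the Gershgorin disk theorem directly. Every eigenvalue $\lambda$ of $M$ lies in some disk
\[
\{z\in\mathbb{C}: |z - M_{ii}| \le \textstyle\sum_{j\neq i}|M_{ij}|\}.
\]
For such a $\lambda$, the triangle inequality gives
\[
|\lambda| \le |M_{ii}| + \textstyle\sum_{j\ne i}|M_{ij}| = R_i(M) \le \bar{R}(M).
\]
Note that $R_i$ as defined in \eqref{eq:Ri} already includes the diagonal term, which is exactly what makes this step clean.

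The second, self-contained route avoids the disk theorem. Take an eigenpair $Mv=\lambda v$ with $v\neq 0$, and pick the index $i$ with $|v_i|=\max_j|v_j|>0$. Then
\[
|\lambda||v_i| = \Bigl|\textstyle\sum_j M_{ij}v_j\Bigr| \le \textstyle\sum_j |M_{ij}||v_i| = R_i(M)|v_i|,
\]
so $|\lambda|\le R_i(M)\le \bar R(M)$.

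Either way, the hypothesis $\bar R(A+BK)<1$ gives $|\lambda|<1$ for every eigenvalue $\lambda$. The spectral radius is therefore strictly less than one, which is the definition of Schur stability. If needed, we can also note that $\|M^k\|_\infty\le \bar R(M)^k$, which gives an explicit geometric decay of trajectories in the $\infty$-norm.

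There is no real obstacle here. The only point needing care is that eigenvalues and eigenvectors may be complex even though $M$ is real, so the argument should be carried out over $\mathbb{C}$, with moduli in place of absolute values.
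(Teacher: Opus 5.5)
Your first route is the paper's proof: each eigenvalue lies in a Gershgorin disk, and the triangle inequality gives $|\lambda| \le R_i(A+BK) \le \bar{R}(A+BK) < 1$. Your second route, via the largest-modulus eigenvector component, is the standard proof of the disk theorem itself, i.e.\ the bound that the spectral radius is at most $\|M\|_\infty = \bar{R}(M)$, so it is a correct self-contained version of the same argument.
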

\begin{proof}
By the Gershgorin disk theorem, every eigenvalue $\lambda$ of
$A_{\mathrm{cl}}=A+BK$ lies in at least one disk
$\mathcal{D}_i=\{z\in\mathbb{C}:|z-A_{\mathrm{cl},ii}|\le r_i\}$
where $r_i=\sum_{j\neq i}|A_{\mathrm{cl},ij}|$.  Hence
$|\lambda|\le |A_{\mathrm{cl},ii}|+r_i = R_i(A_{\mathrm{cl}})
\le \bar{R}(A+BK)< 1$.
\end{proof}

We are interested in improving stability without disturbing the sparsity of $K_\mathrm{s}(\theta^i)$. To do so, we introduce set $\mathcal{K}_{\mathcal{S}}(\theta) := \{K\in\mathbb{R}^{N_u\times N_x}:
\mathrm{supp}(K)\subseteq \mathrm{supp}(K_\mathrm{s}(\theta)) \}$, i.e., the set of all controllers that have the same sparsity as $K_\mathrm{s}(\theta)$.
%\{K\in\mathbb{R}^{N_u\times N_x}: \mathrm{supp}(K)\subseteq\mathcal{S}\}$.
Roughly speaking, we will modify $K_\mathrm{s}(\theta)$ by taking gradients to improve its stability (via Gershgorin condition) and projecting these gradients to preserve its sparsity. Let $\mathrm{proj}_\mathcal{K}(K)$ denote the projection of $K$ into $\mathcal{K}_{\mathcal{S}}(\theta)$, i.e., $K$ with all off-support entries zeroed out.

\begin{proposition}[Convexity]\label{prop:convex}
For each $i$, the function $K\mapsto R_i(A+BK)$ is convex.
Consequently, the Gershgorin-stable set
$\mathcal{G} := \{K:\bar{R}(A+BK)<1\}$ is a convex (open) subset
of\/ $\mathbb{R}^{N_u\times N_x}$, and
$\mathcal{G}\cap\mathcal{K}_{\mathcal{S}}$ is convex.
\end{proposition}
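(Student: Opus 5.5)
The plan is to show convexity of each row-sum map directly, and then obtain the set statements from standard closure properties of convexity. Fix a row index $i$. The map $K \mapsto A+BK$ is affine in $K$, so its $(i,j)$ entry $K \mapsto (A+BK)_{ij} = A_{ij} + \sum_{k} B_{ik} K_{kj}$ is an affine scalar function of $K$. The absolute value $|\cdot|$ is convex on $\mathbb{R}$, and a convex function composed with an affine map is convex. Hence each $K\mapsto |(A+BK)_{ij}|$ is convex. Summing over $j$ (a nonnegative combination) shows that $K\mapsto R_i(A+BK)$ is convex. Equivalently, $R_i(A+BK) = \|e_i^\top(A+BK)\|_1$ is a norm of an affine function of $K$; I would state it this way in one line.

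Next, $\bar{R}(A+BK) = \max_i R_i(A+BK)$ is a pointwise maximum of finitely many convex functions, so it is convex. Its strict sublevel set $\{K : \bar{R}(A+BK) < 1\}$ is therefore convex. Indeed, for $K_1,K_2$ in the set and $\lambda\in[0,1]$, convexity gives $\bar{R}(A+B(\lambda K_1+(1-\lambda)K_2)) \le \lambda\bar{R}(A+BK_1)+(1-\lambda)\bar{R}(A+BK_2) < 1$. For openness, I will note that $\bar{R}(A+BK)$ is continuous in $K$, since it is a finite max of finite sums of absolute values of affine functions. The set is the preimage of the open interval $(-\infty,1)$ and is therefore open.

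Finally, $\mathcal{K}_{\mathcal{S}}(\theta)$ is a linear subspace: the set of matrices vanishing on a fixed index set, i.e., the kernel of a coordinate projection. It is therefore convex. The intersection of two convex sets is convex, which gives convexity of $\mathcal{G}\cap\mathcal{K}_{\mathcal{S}}$.

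There is no real obstacle here; the only step needing care is the strict-inequality sublevel set, which the explicit convex-combination inequality above handles. I would also remark that this set may be empty, which does not affect convexity.
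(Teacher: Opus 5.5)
Your proposal is correct and follows essentially the same route as the paper: entries of $A+BK$ are affine in $K$, so their absolute values are convex, $R_i$ is a sum of these, $\bar{R}$ is a pointwise maximum, and its sublevel set intersected with the subspace $\mathcal{K}_{\mathcal{S}}$ is convex. The only additions are details the paper leaves implicit: you prove openness via continuity of $\bar{R}$ and write out the convex-combination argument for the strict sublevel set.
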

\begin{proof}
Fix row~$i$.  For each column~$j$,
$A_{\mathrm{cl},ij}=A_{ij}+\sum_{u=1}^{n_u}B_{iu}K_{uj}$ is affine
in $K$.  Therefore $|A_{\mathrm{cl},ij}|$ is convex in $K$ (absolute
value of an affine function).  $R_i=\sum_j|A_{\mathrm{cl},ij}|$ is a
non-negative sum of convex functions, hence convex.
$\bar{R}=\max_i R_i$ is the pointwise maximum of convex functions,
hence convex.  The sublevel set $\{K:\bar{R}(A+BK)<1\}$ is therefore
convex.  
Intersecting with the affine subspace
$\mathcal{K}_{\mathcal{S}}$ preserves convexity.
\end{proof}

% ----------------------------------------------------------------
%  4.  SUBGRADIENT DERIVATION
% ----------------------------------------------------------------

\begin{proposition}\label{prop:subgrad}
Let $i^*=\arg\max_i R_i(A+BK)$.  A subgradient of\/ $\bar{R}(A+BK)$ with respect to
$K_{uj}$ is
\begin{equation}\label{eq:subgrad}
  g_{uj} \;=\; \mathrm{sign}\!\bigl(A_{i^*j} + (BK)_{i^*j}\bigr)
               B_{i^*u}\,.
\end{equation}
%Then, the projected subgradient is$\tilde{g}_{uj}$$=g_{uj}\,\mathbf{1}_{(u,j)\in\mathcal{S}}$.
\end{proposition}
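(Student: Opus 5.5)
We prove the claim by the standard subdifferential rules for a pointwise maximum of convex functions, combined with the chain rule for the absolute value of an affine map. By Proposition~\ref{prop:convex}, each $R_i(A+BK)$ is convex in $K$, and $\bar{R}(A+BK)=\max_i R_i(A+BK)$. For a pointwise maximum of convex functions, any subgradient of an active function is a subgradient of the maximum. Concretely, if $G\in\partial_K R_{i^*}(A+BK)$ at $K$, then for every $K'$
\[
\bar{R}(A+BK')\;\ge\; R_{i^*}(A+BK')\;\ge\; R_{i^*}(A+BK)+\langle G,K'-K\rangle \;=\;\bar{R}(A+BK)+\langle G,K'-K\rangle .
\]
It therefore suffices to show that $g$ as in \eqref{eq:subgrad} is a subgradient of the single function $K\mapsto R_{i^*}(A+BK)$.

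Next, write $R_{i^*}(A+BK)=\sum_j |c_j(K)|$ with $c_j(K):=A_{i^*j}+\sum_{u} B_{i^*u}K_{uj}$. Each $c_j$ is affine in $K$ with $\partial c_j/\partial K_{uj'}=B_{i^*u}\mathbf{1}\{j'=j\}$. For scalar $z$, $\mathrm{sign}(z)\in\partial|z|$, with the convention $\mathrm{sign}(0)=0\in[-1,1]$. The affine chain rule then gives that the matrix with entries $\mathrm{sign}(c_j(K))\,B_{i^*u}\mathbf{1}\{j'=j\}$ is a subgradient of $|c_j|$. One can verify this directly from $|c_j(K')|\ge \mathrm{sign}(c_j(K))\,c_j(K') = |c_j(K)|+\mathrm{sign}(c_j(K))(c_j(K')-c_j(K))$, together with linearity of $c_j(K')-c_j(K)$ in $K'-K$.

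Summing these subgradients over $j$ (the sum of subgradients is a subgradient of the sum) yields the entry $(u,j)$ equal to $\mathrm{sign}(A_{i^*j}+(BK)_{i^*j})\,B_{i^*u}$. This is exactly \eqref{eq:subgrad}, and combining it with the first paragraph finishes the proof.

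The only delicate point is nondifferentiability: ties in $\arg\max_i$, and zero entries $c_j(K)=0$. The argument above handles both, since we only claim \emph{a} subgradient. Any maximizing index $i^*$ works, and $\mathrm{sign}(0)$ may be taken as any value in $[-1,1]$, in particular $0$. I would state this convention explicitly. No step is a real obstacle; the main care needed is keeping the index bookkeeping $(u,j)$ versus $(i^*,j)$ straight in the chain rule.
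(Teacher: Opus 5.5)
Your proof is correct and takes essentially the same route as the paper. You reduce the subgradient of the pointwise maximum to a subgradient of the active row $R_{i^*}$, write $R_{i^*}$ as a sum over columns $j$ of absolute values of affine functions of $K$, and apply the sign rule column by column. Your version is also somewhat more careful. You verify the subgradient inequalities explicitly, and you use a single value of $\mathrm{sign}(0)\in[-1,1]$ for all $u$ in a column $j$ with $A_{i^*j}+(BK)_{i^*j}=0$. This is the correct form of the subdifferential. The paper's remark, read literally, lets each entry be chosen independently in $[-|B_{i^*u}|,|B_{i^*u}|]$, which does not give a valid subgradient in general.
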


\begin{proof}
Since $\bar{R}(A+BK)=\max_i R_i(K)$, a subgradient of $\bar{R}$ at $K$
can be taken as any subgradient of $R_{i^*}$ at $K$
\cite[Sec.~3.1.2]{shin}.  Now,
\[
  R_{i^*}(K) = \sum_{j=1}^{n_x} \bigl|A_{i^*j} + (BK)_{i^*j}\bigr|
  = \sum_{j=1}^{n_x} \bigl|A_{i^*j} + \textstyle\sum_u B_{i^*u}K_{uj}\bigr|.
\]
Each summand $\phi_j(K) :=|A_{i^*j}+\sum_u B_{i^*u}K_{uj}|$ is
the absolute value of an affine function $\mu_j(K)$.  When
$\mu_j(K)\neq 0$, $\phi_j$ is differentiable with
$\partial\phi_j/\partial K_{uj}=\mathrm{sign}(\mu_j)\cdot B_{i^*u}$.
When $\mu_j(K)=0$, any value in
$[-|B_{i^*u}|,\,|B_{i^*u}|]$ is a valid subgradient element.
%Taking $\mathrm{sign}(0)=0$ (as in the implementation) yields a valid choice.  
Summing over $j$ gives the subgradient of $R_{i^*}$; since each $K_{uj}$ appears only in the $j$-th summand, the subgradient with respect to $K_{uj}$ reduces to~\eqref{eq:subgrad}.
%Projecting onto $\mathcal{K}_{\mathcal{S}}$ simply zeros out components outside the support.
\end{proof}

Scalar subgradients $g_{uj}$ can be stacked together to form matrix subgradient $g$. Let $\tilde{g} := \mathrm{proj}_{\mathcal{K}}(g)$ denote its projection to preserve sparsity. We are now ready to propose the alternative cost evaluation. Given gene $\theta^i$ where $A+BK_{\mathrm{s}}(\theta^i)$ is unstable, we solve convex feasibility problem
\begin{equation}\label{eq:feasibility}
  \text{find } K^r \in\mathcal{K}_{\mathcal{S}} \quad
  \text{s.t.}\quad \bar{R}(A+BK^r) \;\le\; \rho^*,
\end{equation}
where $\rho^*<1$ is some target row-sum (e.g.\ $0.95$), via the iteration
\begin{equation}\label{eq:subgrad_update}
  K^{r,(t+1)} = \mathrm{proj}_{\mathcal{K}}\!\Bigl[
    K^{r,(t)} - \eta_t\,\tilde{g}^{(t)}
  \Bigr],
\end{equation}
with Polyak step size
\begin{equation}\label{eq:polyak}
  \eta_t = \frac{\bar{R}(K^{r,(t)})-\rho^*}
               {\|\tilde{g}^{(t)}\|_F^2}\,,
\end{equation}

\begin{proposition} \label{prop:phase1_conv}
If $\mathcal{G}\cap\mathcal{K}_{\mathcal{S}}\neq\emptyset$
(i.e., a Gershgorin-stable controller exists on the given sparsity
pattern), then the Polyak subgradient iteration
\eqref{eq:subgrad_update}--\eqref{eq:polyak} generates a sequence
$\{K^{r,(t)}\}$ satisfying
\begin{equation}\label{eq:phase1_rate}
  \min_{0\le\tau\le t}\; \bar{R}(K^{r,(\tau)}) - \rho^*
  \;\le\;
  \frac{\|K^{r,(0)}-K^*\|_F^2}{2\,\sum_{\tau=0}^{t}\eta_\tau}
  \;\xrightarrow{t\to\infty}\; 0,
\end{equation}
where $K^*\in\mathcal{G}\cap\mathcal{K}_{\mathcal{S}}$ is some feasible point.  In particular, the iterates reach $\bar{R}<1$ in
finitely many steps, which by Lemma~\ref{lem:gers} guarantees Schur
stability.
\end{proposition}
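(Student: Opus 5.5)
The plan is the standard Polyak subgradient analysis, applied to the convex function $f(K) := \bar{R}(A+BK)$ restricted to the subspace $\mathcal{K}_{\mathcal{S}}$. By Proposition~\ref{prop:convex}, $f$ is convex. Since $\mathcal{K}_{\mathcal{S}}$ is a linear subspace, projecting the subgradient $g$ of Proposition~\ref{prop:subgrad} gives a valid subgradient $\tilde g = \mathrm{proj}_{\mathcal{K}}(g)$ of $f|_{\mathcal{K}_{\mathcal{S}}}$. Indeed, for $K, K' \in \mathcal{K}_{\mathcal{S}}$ we have $\langle g, K'-K\rangle = \langle \tilde g, K'-K\rangle$, because $K'-K$ has no off-support entries.

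Fix a feasible point $K^* \in \mathcal{G}\cap\mathcal{K}_{\mathcal{S}}$. Strictly, the feasibility problem \eqref{eq:feasibility} needs $f(K^*) \le \rho^*$. I would take $\rho^*$ to be at least the minimum of $f$ over $\mathcal{K}_{\mathcal{S}}$, which is less than 1 under the hypothesis, and then choose $K^*$ with $f(K^*)\le\rho^*$. I will flag this as a mild gap in the stated hypothesis.

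The key step is the one-step expansion. Projection onto a subspace is nonexpansive, and $K^*$ is fixed by it, so
\[
\|K^{r,(t+1)}-K^*\|_F^2 \le \|K^{r,(t)}-K^*\|_F^2 - 2\eta_t\langle \tilde g^{(t)}, K^{r,(t)}-K^*\rangle + \eta_t^2\|\tilde g^{(t)}\|_F^2 .
\]
The subgradient inequality gives $\langle \tilde g^{(t)}, K^{r,(t)}-K^*\rangle \ge f(K^{r,(t)}) - f(K^*) \ge f(K^{r,(t)})-\rho^* =: e_t$. Substituting the Polyak step $\eta_t = e_t/\|\tilde g^{(t)}\|_F^2$ yields $\|K^{r,(t+1)}-K^*\|_F^2 \le \|K^{r,(t)}-K^*\|_F^2 - \eta_t e_t$.

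Telescoping from $0$ to $t$ gives $\sum_\tau \eta_\tau e_\tau \le \|K^{r,(0)}-K^*\|_F^2$. Bounding $e_\tau$ below by $\min_\tau e_\tau$ produces \eqref{eq:phase1_rate}; the factor $1/2$ is available if one keeps the slack in the standard form $2\eta_t e_t - \eta_t^2\|\tilde g\|^2$. The iteration is only run while $e_t > 0$, so each step is well defined. If $\tilde g^{(t)} = 0$ with $e_t>0$, the current point minimizes $f$ on $\mathcal{K}_{\mathcal{S}}$, contradicting the existence of $K^*$.

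To show the right-hand side tends to $0$, I need $\sum_\tau \eta_\tau = \infty$. The plan is to bound $\|\tilde g\|_F$ uniformly: by \eqref{eq:subgrad}, every entry satisfies $|g_{uj}|\le |B_{i^*u}|$, so $\|\tilde g\|_F^2 \le N_x\|B\|_F^2 =: G^2$. Then $\eta_\tau \ge e_\tau/G^2$, and the telescoped sum gives $\sum_\tau e_\tau^2 \le G^2\|K^{r,(0)}-K^*\|_F^2$, hence $e_\tau \to 0$, i.e.\ $\min_\tau e_\tau \le G\|K^{r,(0)}-K^*\|_F/\sqrt{t+1} \to 0$.

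The main obstacle is reconciling the stated form of the bound (with $\sum\eta_\tau$ in the denominator) with divergence of that sum. If $e_\tau$ decays quickly, $\sum\eta_\tau$ could stay bounded. In that case I would instead establish convergence through the $\sum e_\tau^2$ bound above, which directly gives $\min_\tau e_\tau \to 0$.

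Finite-time entry into $\{\bar R<1\}$ then follows because $\rho^*<1$. Once $\min_\tau e_\tau < 1-\rho^*$, some iterate has $\bar R(A+BK^{r,(\tau)}) < 1$, which happens after at most about $G^2\|K^{r,(0)}-K^*\|_F^2/(1-\rho^*)^2$ steps. Lemma~\ref{lem:gers} then gives Schur stability.
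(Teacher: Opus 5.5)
Your route is the paper's: non-expansive projection (here the identity, since $K^{r,(t)}$ and $\tilde g^{(t)}$ already lie in $\mathcal{K}_{\mathcal{S}}$), the subgradient inequality against a point with $\bar R(K^*)\le\rho^*$, and the Polyak substitution giving $\|K^{r,(t+1)}-K^*\|_F^2\le\|K^{r,(t)}-K^*\|_F^2-\eta_te_t$. You then use the uniform bound $\|\tilde g^{(t)}\|_F\le\sqrt{N_x}\,\|B\|_F$ to get $\sum_t e_t^2<\infty$, hence $e_t\to 0$ and finite-time entry into $\{\bar R<1\}$. Both caveats you raise are genuine, and the paper glosses over them too. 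First, the hypothesis $\mathcal{G}\cap\mathcal{K}_{\mathcal{S}}\neq\emptyset$ only supplies $\bar R(K^*)<1$, while the proof silently uses $\bar R(K^*)\le\rho^*$. Second, the vanishing of the error comes from the $\sum_t e_t^2$ argument, not from $\sum_\tau\eta_\tau=\infty$, which need not hold.

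The one step that fails is your claim that the factor $1/2$ in \eqref{eq:phase1_rate} can be recovered ``by keeping the slack.'' Under the Polyak step, $\eta_t^2\|\tilde g^{(t)}\|_F^2=\eta_te_t$, so $2\eta_te_t-\eta_t^2\|\tilde g^{(t)}\|_F^2=\eta_te_t$ exactly; there is no slack. The textbook bound with $2\sum_\tau\eta_\tau$ in the denominator carries $\sum_\tau\eta_\tau^2\|\tilde g^{(\tau)}\|_F^2$ in the numerator. Under Polyak steps that term equals $\sum_\tau\eta_\tau e_\tau$, which is exactly where the $2$ is lost. At $t=0$ the displayed inequality is equivalent to $2e_0^2\le\|\tilde g^{(0)}\|_F^2\|K^{r,(0)}-K^*\|_F^2$. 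The subgradient inequality plus Cauchy--Schwarz only gives $e_0^2\le\|\tilde g^{(0)}\|_F^2\|K^{r,(0)}-K^*\|_F^2$, and equality is attainable. Take the scalar plant $A=3$, $B=1$, $K^{r,(0)}=-1$, $\rho^*=0.95$ and the feasible point $K^*=-2.05$. Then $e_0=\eta_0=1.05$, but $\|K^{r,(0)}-K^*\|_F^2/(2\eta_0)=0.525$. With $\rho^*=0$ the only feasible point is $K^*=-3$, and the display still fails ($2>1$). So the correct conclusion is the one your telescoping actually produces, $\min_{\tau\le t}e_\tau\le\|K^{r,(0)}-K^*\|_F^2/\sum_{\tau=0}^{t}\eta_\tau$. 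The paper's appeal to ``the standard telescoping argument'' does not yield the displayed constant either, so state the bound without the $2$ rather than trying to recover it.
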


\begin{proof}
This is a standard result for Polyak-step subgradient methods applied
to convex feasibility~\cite{fazel2018global}.  Let $K^*$ be
any point with $\bar{R}(K^*)\le\rho^*$.  By convexity of $\bar{R}$:
\[
  \bar{R}(K^{r,(t)})-\rho^*
  \;\le\;
  \bar{R}(K^{r,(t)})-\bar{R}(K^*)
  \;\le\;
  \bigl\langle \tilde{g}^{(t)},\; K^{r,(t)}-K^* \bigr\rangle_F\,.
\]
From the update rule and the non-expansiveness of $\Pi_{\mathcal{S}}$:
\begin{align*}
\|K^{r,(t+1)}-K^*\|_F^2
&\le \|K^{r,(t)}-\eta_t\tilde{g}^{(t)}-K^*\|_F^2 \\
&= \|K^{r,(t)}-K^*\|_F^2
   - 2\eta_t\big\langle\tilde{g}^{(t)},K^{r,(t)}-K^*\big\rangle_F \\
&\qquad\quad + \eta_t^2\|\tilde{g}^{(t)}\|_F^2 .
\end{align*}
Substituting the Polyak step size $\eta_t =
(\bar{R}(K^{r,(t)})-\rho^*)/\|\tilde{g}^{(t)}\|_F^2$ and using
the subgradient inequality:
\[
  \|K^{r,(t+1)}-K^*\|_F^2
  \;\le\;
  \|K^{r,(t)}-K^*\|_F^2
  - \frac{(\bar{R}(K^{r,(t)})-\rho^*)^2}{\|\tilde{g}^{(t)}\|_F^2}\,.
\]
Therefore $\{\|K^{r,(t)}-K^*\|_F^2\}$ is nonincreasing and
\[
  \sum_{t=0}^{\infty}
  \frac{(\bar{R}(K^{r,(t)})-\rho^*)^2}{\|\tilde{g}^{(t)}\|_F^2}
  \;\le\; \|K^{r,(0)}-K^*\|_F^2 < \infty.
\]
Since $\|\tilde{g}^{(t)}\|_F$ is bounded (by
$\|B\|_F\sqrt{n_x}$), the numerator
$(\bar{R}(K^{r,(t)})-\rho^*)^2\to 0$, establishing fast convergence.
The rate~\eqref{eq:phase1_rate} follows from the standard telescoping
argument for subgradient methods.
\end{proof}

Thus, if a Gershgorin-stable controller exists on the given sparsity pattern, our method is guaranteed to find it. In practice (see Section \ref{sec:simulations}), this allows us to better utilize a substantial portion of previously ``unstable" genes. The method is summarized in Algorithm \ref{alg:gers-repair}. When running general EA on an open-loop unstable system, instead of using $K_\mathrm{s}(\theta^i)$ to evaluate cost for unstable individual $i$ (Line 4 in Algorithm \ref{alg:ea}), use $K^r_\mathrm{s}(\theta^i)$ instead. Similarly, when using the optimal gene to design controller (Line 12 in Algorithm \ref{alg:ea}), use controller $K^r_\mathrm{s}(\theta^\star)$ instead of $K_\mathrm{s}(\theta^\star)$.
%If Phase~1 does not achieve stability within $T_{\max}$ iterations, the individual is discarded and replaced by a fresh random offspring.
This change does not affect the asymptotic complexity or convergence properties of the original algorithm.

%\paragraph{Complexity.} Each subgradient iteration costs $\mathcal{O}(n_u n_x)$ for the outer-product gradient and masking; the periodic eigenvalue check costs $\mathcal{O}(n_x^3)$ but is performed only every 10 iterations. 

\begin{algorithm}%[t]
\caption{Alternative controller for unstable genes}
\label{alg:gers-repair}
\begin{algorithmic}[1]
\Statex \textbf{Input:} 
\Statex \quad Plant matrices $A, B$
\Statex \quad Gene $\theta$ with $A+BK_\mathrm{s}(\theta)$ unstable
\Statex \quad Parameters: target row-sum $\rho^* < 1$, max iterations $T$
\State $\mathcal{S} \gets \{(u,j) : K_\mathrm{s}(\theta) \neq 0\}$
\State $K^r \gets K_\mathrm{s}(\theta)$
\State \textbf{for} $t = 1$ \textbf{to} $T$\textbf{:}
\State \quad $A_{\mathrm{cl}} \gets A + BK^r$
\State \quad \textbf{for} $i = 1$ \textbf{to} $n$\textbf{:}
\State \quad \quad $R_i \gets \sum_{j=1}^{n} |[A_{\mathrm{cl}}]_{ij}|$
\State \quad $i^* \gets \arg\max_i R_i$
\State \quad \textbf{if} $R_{i^*} < \rho^*$\textbf{:} \textbf{break}
\State \quad \textbf{for} $(u,j) \in \mathcal{S}$\textbf{:}
\State \quad \quad $g_{uj} \gets \mathrm{sign}\bigl([A_{\mathrm{cl}}]_{i^* j}\bigr) \cdot B_{i^* u}$
\State \quad $\eta \gets \min\!\left(\dfrac{R_{i^*} - \rho^*}{\|g\|_F^2},\; 0.5\right)$
\State \quad $K^r \gets K^r - \eta \, g$
%\State $K^{\mathrm{out}} \gets \arg\min_{K' \in \{K_t\}} \text{spectral radius of } (A + BK')$ \Comment{best iterate}
\State \Return $K^r$
\end{algorithmic}
\end{algorithm}

\section{Simulations} \label{sec:simulations}
We first demonstrate the efficacy of Algorithm \ref{alg:ea}. All experiments use $Q = I_{N_x}$, $R = I_{N_u}$, cost weights $w_c = 0.05$, $w_a = 0.4$, $w_s = 0.2$, and EA parameters $N_p = 20$, $G_{\max} = 150$, $p_c = 0.8$,
$p_m = 0.05$, $n_e = 10$, $\tau = 0$, $d = 5$. We test on three different plants, whose parameters are summarized in Table~\ref{tab:sim_params}. Code to reproduce simulations may be found at \texttt{github.com/pengyanw/EA}. The first two plants are linearized swing equations embedded in randomized grid topologies (similar to \cite{anderson2019system}); the third plant is the same set of equations embedded in the IEEE 13-bus topology. All simulations presented in this section run on a standard laptop computer (at least, on the first author's laptop) in about  60 seconds.
\begin{table}[h]
\centering
\caption{Simulation parameters for the three experiments.}
\label{tab:sim_params}
\begin{tabular}{lccc}
\hline
\textbf{Parameter} & \textbf{$5\times5$ Grid} & \textbf{$7\times7$ Grid}
                   & \textbf{IEEE 13-bus} \\
\hline
$N_x$              & 50   & 98   & 26  \\
$N_u$              & 25   & 49   & 13  \\
%$T_s$ (s)          & 0.2  & 0.2  & 0.2 \\
Spec. radius of $A$& $<1$ & $<1$ & $=1$ \\
%Gershgorin repair  & off  & off  & off  \\
%$\kappa$-trunc ref & $\kappa=1$  & $\kappa=1$ & $\kappa=2$ (all stable) \\
\hline
\end{tabular}
\end{table}

Results are shown in Figure ~\ref{fig:perf-conv}. In our results, we include comparisons to two baselines: original (dense) LQR controller $K_\mathrm{d}$ and diagonal LQR (i.e., dense LQR with all cross-subsystem communication links removed)\footnote{We also tested intermediate truncations of the type suggested in \cite{shin} but found that surprisingly, they are often outperformed by one of these two baselines. These are omitted for simplicity's sake.}. We observe that for all plants, our EA always improves substantially over both baselines, reducing cost by 47--72\% over dense LQR
and 28--52\% over diagonal LQR. Generally, dense LQR incurs high co-design penalties as it uses nearly all possible communication links, actuators, and sensors; conversely, diagonal LQR incurs high performance penalty (i.e., $J_\mathrm{LQR}$) due to the loss of cross-subsystem communication. The EA effectively balances between these extremes. Additionally, the EA performs better compared to baseline for larger systems. We also include the numerical per-generation convergence values predicted by \eqref{eq:plotted_bound} , and see that they approximate true EA behavior quite well, particularly in later generations. The optimal controller and associated communication link, actuator, and sensor selections (as returned by EA) is also shown in Figure \ref{fig:perf-conv}; we see that actuator selection is quite sparse for all three plants. Furthermore, the controller for the IEEE 13-bus (right panel) is highly sparse, consisting of one sensor, one actuator, and one communication link.

%\textbf{(ii) The advantage grows with system dimension.} As $n$ increases, the d controller's structural cost
%scales as $O(mn)$ while diagonal cost stays $O(m)$.
%EA exploits the expanding combinatorial space to find
%increasingly efficient s topologies,
%widening the gap to both baselines.

\begin{figure*}[t]
  \centering
  \begin{minipage}[b]{\textwidth}
    \centering
    \includegraphics[width=0.7\linewidth]{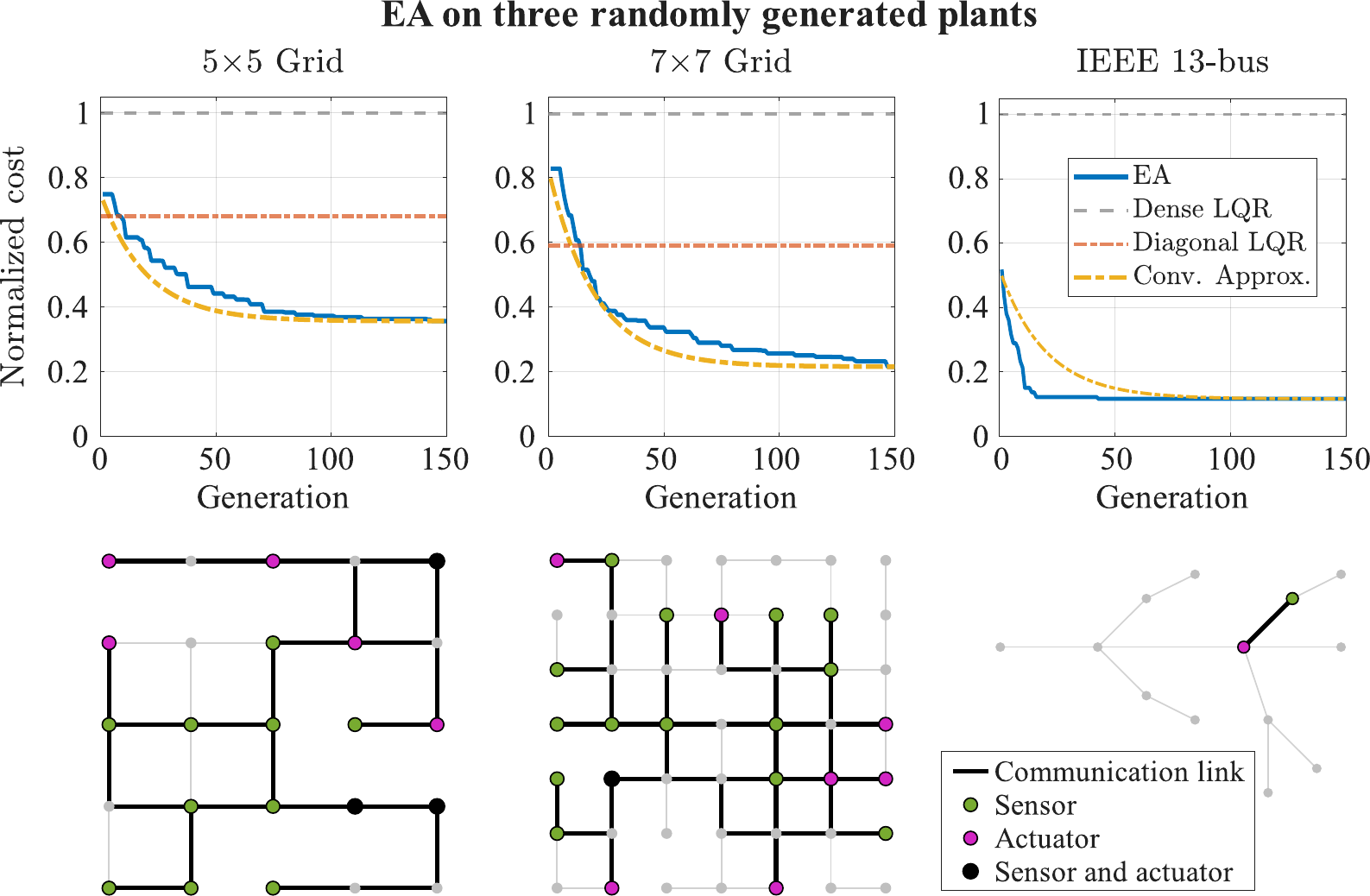}
    \caption{Results of running Algorithm \ref{alg:ea} on three different plants. \textbf{Top panel:} normalized cost over generations; normalized cost is defined as $J_\mathrm{EA}(K^*)/J_{\mathrm{EA}}(K_\mathrm{d})$, where $K^*$ is the best per-generation controller. The solid blue line indicates EA performance; the dashed grey and red lines indicate the dense LQR and diagonal LQR baselines, respectively. For the IEEE 13-bus system, the diagonal LQR is unstable so it omitted. We also include numerical convergence approximations in the dashed yellow line using results from Section \ref{sec:convergence}. \textbf{Bottom panel:} Graphical depiction of one of the optimal controllers returned by the EA at termination and its link, actuator, and sensor selections. Grey circles and dashes indicate nodes and edges in the plant. Black edges indicate communication links used by the EA controller; green and black circles indicate sensors used by the EA controller; pink and black circles indicate actuators used by the EA controller.}
    \label{fig:perf-conv}
  \end{minipage}
\end{figure*}

Next, we demonstrate the effectiveness of our proposed repair mechanism on an unstable system. We use the same $5{\times}5$ grid as previously, but scale plant matrix $A$ so that it has a spectral radius of $1.1$ and is unstable. We compare the performance of Algorithm \ref{alg:ea} alone with the performance of Algorithm \ref{alg:ea} in combination with repair mechanism Algorithm \ref{alg:gers-repair}, with parameter $\rho^* = 0.95$. We note that even without the repair mechanism, EA outperforms the baseline by about 25\%; however, with the repair mechanism, this improvement increases to 35\%. We also study the number of unstable individuals (as naively evaluated in Algorithm \ref{alg:ea} or evaluated after repair in Algorithm \ref{alg:gers-repair}). When no repairs occur, this values stays relatively fixed over generations; approximately half of the population is unstable at any given time. However, when repairs occur, early generations have nearly no unstable individuals. The number of unstable individuals rises in later generations, as the EA begins searching more and more sparse solutions that are more likely to be unstable prior to repair.

\begin{figure}[t]
    \centering
    \includegraphics[width=0.5\textwidth]{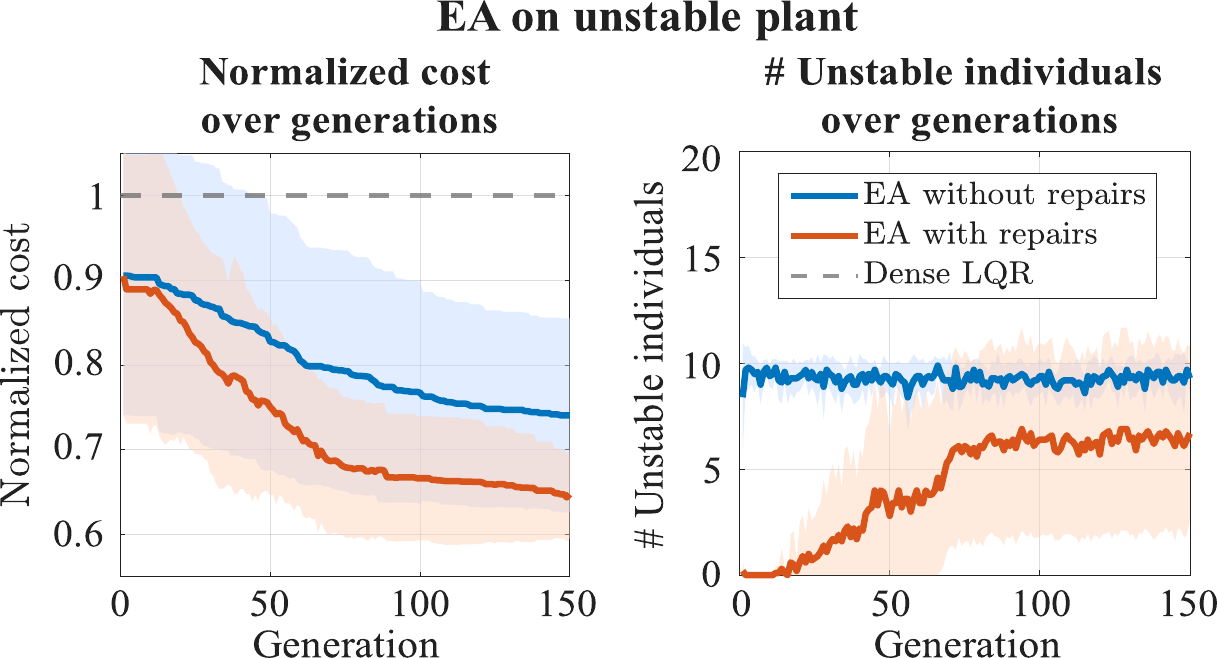}
    \caption{Results of running Algorithm \ref{alg:ea} (``EA without repairs") and Algorithm \ref{alg:ea} with Algorithm \ref{alg:gers-repair} (``EA with repairs") on an unstable plant, averaged over 10 different random seeds. Shaded regions indicate standard deviations. While both methods outperform dense LQR, the addition of Algorithm \ref{alg:gers-repair} further boosts performance.}
    \label{fig:gers_comparison}
\end{figure}
% ------------------------------------------------------------------------

\section{Conclusions and future work}
In this paper, we proposed an evolutionary algorithm to perform co-design of LQ control cost and material cost (actuators, sensors, communication links) on a linear time-invariant plant, and demonstrated its efficacy in simulations. While this paper focuses on the LQ case, the general proposed EA framework and repair mechanism may be applicable to nonlinear systems as well; this will be the topic of future investigations.

\bibliography{cit}
\bibliographystyle{IEEEtran}

\end{document}